\date{}
\title{Multilevel polynomial partitions
and simplified range searching\thanks{Research
supported by the  ERC Advanced Grant No.~267165.
}
}
\newif\ifcmts
\cmtstrue

\author{
{\sc Ji\v{r}\'{\i} Matou\v{s}ek}
\\
   {\footnotesize Department of Applied Mathematics}\\[-1.5mm]
   {\footnotesize  Charles University, Malostransk\'{e} n\'{a}m. 25}\\[-1.5mm]
{\footnotesize  118~00~~Praha~1,
   Czech Republic, and}\\
{\footnotesize    Department of Computer Science}\\[-1.5mm]
{\footnotesize    ETH Zurich,
      8092 Zurich, Switzerland}
\and
{\sc Zuzana Pat{\'a}kov{\'a}\thanks{Partially supported by the project CE-ITI
(GACR P202/12/G061) of the Czech Science Foundation and by the Charles University Grants SVV-2014-260103 and GAUK 690214.}}
\\
   {\footnotesize Department of Applied Mathematics and} \\[-1.5mm]
   {\footnotesize Computer Science Institute}\\[-1.5mm]
   {\footnotesize  Charles University, Malostransk\'{e} n\'{a}m. 25}\\[-1.5mm]
{\footnotesize  118~00~~Praha~1,
   Czech Republic}
}

\documentclass[11pt]{article}

\usepackage{a4wide}
\usepackage{latexsym}
\usepackage{amssymb,amsmath,amsthm}
\usepackage{graphicx}
\usepackage{url}
\usepackage{hyperref}
\usepackage{color}
\usepackage{enumerate}

\newtheorem{theorem}{Theorem}[section]

\newtheorem{lemma}[theorem]{Lemma}
\newtheorem{corol}[theorem]{Corollary}
\newtheorem{conj}[theorem]{Conjecture}

\newcommand{\heading}[1]{\vspace{1ex}\par\noindent{\bf\boldmath #1}}

\makeatletter
\newcommand{\ProofEndBox}{{\ifhmode\unskip\nobreak\hfil\penalty50 \else
          \leavevmode\fi\quad\vadjust{}\nobreak\hfill$\Box$
            \finalhyphendemerits=0 \par}}
\makeatother

\newcommand{\R}{{\mathbb{R}}}
\newcommand{\C}{{\mathbb{C}}}
\newcommand{\Q}{{\mathbb{Q}}}

\widowpenalty10000
\clubpenalty10000

\newcommand{\PC}{{\mathrm{P}\C}}  

\newcommand\eps{\varepsilon}

\newcommand\FF{\mathcal{F}}
\newcommand\GG{\mathcal{G}}
\newcommand\TT{\mathcal{T}}
\newcommand\NF{\mathsf{NF}}
\newcommand\supp{\mathsf{supp}}
\newcommand\lm{\ensuremath{\ell m}}

\definecolor{seda}{rgb}{.7,.7,.7}

\def\:{\colon}

\long\def\onefigure#1#2{
\begin{figure*}[tbp]
\begin{center}
#1
\end{center}
\caption{#2}
\end{figure*}
}

\def\immediateFigure#1{%
\smallskip\begin{center}#1\end{center}\smallskip }

\newcommand{\labfig}[2]  
{\onefigure{\mbox{\includegraphics{Figures/#1}}}{\label{f:#1} #2} }

\newcommand{\labfigw}[3]  
{\onefigure{\mbox{\includegraphics[width=#2]{Figures/#1}}}{\label{f:#1} #3}}

\newcommand{\immfig}[1]  
{\immediateFigure{\mbox{\includegraphics{Figures/#1}}}}

\newcommand{\immfigw}[2] 
{\immediateFigure{\mbox{\includegraphics[width=#2]{Figures/#1}}}}

\ifcmts
\newcommand{\marrow}{\marginpar{\boldmath$\longleftarrow$}}
\newcommand{\jirka}[1]{\ifhmode\newline\fi\marrow \textsf{*** (JM: ) #1\newline}}
\newcommand{\zuzka}[1]{\ifhmode\newline\fi\marrow \textsf{*** (ZS: ) #1\newline}}

\else
\newcommand{\marrow}{}
\newcommand{\jirka}[1]{}

\fi

\begin{document}

\maketitle

\begin{abstract} 
The polynomial partitioning method of Guth and Katz
[arXiv:1011.4105] has numerous applications in discrete and 
computational geometry. It partitions a given $n$-point set $P\subset\R^d$
using the zero set $Z(f)$ of a suitable $d$-variate polynomial $f$.
Applications of this result are often complicated by the problem,
what should be done with the points
of $P$ lying within $Z(f)$? A natural approach is to partition
these points with another polynomial and continue further 
in a similar manner.  So far it has been pursued with limited success---several
authors managed to construct and apply a second 
partitioning polynomial, but further progress has been prevented
by technical obstacles.

We provide a polynomial partitioning method with up to $d$ polynomials
in dimension $d$, which allows for a complete decomposition
of the given point set. We apply it to obtain a new
algorithm for the \emph{semialgebraic range searching problem}.
Our algorithm has running time bounds similar to a recent algorithm
by Agarwal, Sharir, and the first author [SIAM~J.~Comput. 
42: 2039--2062, 2013], but it is simpler both conceptually and
technically. 
While this paper has been in preparation, 
Basu and Sombra, as well as Fox, Pach, Sheffer, Suk, and Zahl, obtained
results concerning polynomial partitions which overlap with ours
to some extent.
\end{abstract}

\section{Introduction}

\heading{Polynomial partitions. } Since the late 1980s,
numerous problems in discrete and computational geometry have been
solved by geometric divide-and-conquer method, where a suitable partition
of space is used to subdivide a geometric problem into simpler subproblems.

The earliest, and most widely applied, kinds of such partitions are 
\emph{cuttings}, based mainly on ideas of Clarkson (e.g., \cite{c-narsc-87}) 
and  Haussler and Welzl \cite{hw-ensrq-87}.  
See, e.g., \cite{Chazelle-cuttings} for a survey of cuttings and
their applications. 

Using cuttings as the main tool, another kind of space partition,
called \emph{simplicial partitions},
was introduced in \cite{m-ept-92} (and further improved 
by Chan \cite{Chan-newpart}). Given an $n$-point set $P\subset \R^d$
and a parameter $r>1$, a simplicial $\frac1r$-partition
is a collection of simplices (of dimensions $0$ through $d$),
such that each of them contains at most  $n/r$ points of~$P$
and together they cover $P$. In 
Chan's version, they can also be assumed to be pairwise disjoint.

Let us introduce  the following convenient terminology:
a set $A$ \emph{crosses} a set $B$ if $A$ intersects $B$ but does not
contain it. 
The main parameter of a simplicial partition is the maximum number
of simplices of the partition that can be simultaneously crossed by
a hyperplane (or, equivalently, by a halfspace). 
One can construct simplicial partitions where
this number is bounded by $O(r^{1-1/d})$ \cite{m-ept-92,Chan-newpart},
which is asymptotically optimal in the worst case
(throughout this paper, we consider the space dimension
$d$ as a constant, and the implicit constants in asymptotic notation
may depend on it, unless explicitly stated otherwise). 

Simplicial partitions work mostly fine for problems involving points
and hyperplanes in $\R^d$. However, they are much less useful if
hyperplanes are replaced by lower-dimensional objects---such as lines---or curved objects---such as spheres---or other hypersurfaces. 

Guth and Katz \cite{GK2} invented a new kind of partitions, called
\emph{polynomial partitions}, which overcome these drawbacks to some extent.
The most striking application of polynomial partitions so far is probably
still the original one in \cite{GK2} in a solution of Erd\H{o}s'
problem of distinct distances (also see Guth
\cite{Guth-GK2-simplif} for a simplified but weaker version
of the main result of \cite{GK2}),
but a fair number of other applications
have been found since then: see Solymosi and Tao \cite{SoTa},
Zahl \cite{zahl3d}, Kaplan et al. \cite{KMS},
Kaplan et al. \cite{KMSS}, Zahl \cite{zahl-complSzT}, Wang et al.
 \cite{WangYangZhang},
 Agarwal et al. 
\cite{AMSII}, Sharir, Sheffer, Zahl \cite{ShaSheZa}, and
Sharir and Solomon \cite{ShaSo} (our list is most likely incomplete
and we apologize for omissions).


Given an $n$-point set $P\subset\R^d$ and a parameter $r>1$, we say that
a nonzero polynomial $f\in\R[x_1,\ldots,x_d]$ is a 
\emph{$\frac1r$-partitioning polynomial} for $P$ if none of the
connected components of $\R^d\setminus Z(f)$ contains more than
$n/r$ points of~$P$.

Guth and Katz \cite{GK2} proved that, for every $P$
and every $r>1$, there exists a $\frac1r$-partitioning polynomial
of degree $O(r^{1/d})$. 

From the results of real algebraic geometry on
the complexity of arrangements of zero sets of polynomials
(see \cite{BasuPollackRoy-book})
it follows that any hyperplane $h$ intersects at most $O(r^{1-1/d})$
components of $\R^d\setminus Z(f)$, and hence any halfspace
crosses at most $O(r^{1-1/d})$ components of $\R^d\setminus Z(f)$.
Moreover, using a more recent result of Barone and Basu \cite{BB}
discussed below, one obtains that  an algebraic variety $X$
of dimension $k$ defined by polynomials of 
constant-bounded degrees crosses at most $O(r^{k/d})$ 
components of $\R^d\setminus Z(f)$. 
In this respect, polynomial partitions match the performance of simplicial
partitions concerning hyperplanes and give a crucial advantage for
other varieties. However, they still leave an important issue open: namely,
what should be done with the \emph{exceptional set} $P^*:=P\cap Z(f)$
that ends up lying within the zero set of the partitioning polynomial.

\heading{Multilevel polynomial partitions.}
At first sight, it may seem that this issue can be remedied, say,
by a suitable perturbation of the polynomial
$f$. However, if all of $P$ lies on
a line in $\R^d$, say, then a degree-$D$ polynomial can partition
it into at most $D+1$ pieces, and so if we want all of $P$ to be 
partitioned into pieces of size $n/r$, then we will need degree about $r$,
as opposed to $r^{1/d}$ in the Guth--Katz polynomial partition theorem.

A natural idea is to partition the exceptional set $P^*$ further by another
polynomial $g$ such that $Z(f,g):=Z(f)\cap Z(g)$ has dimension at most $d-2$.
If $Z(f,g)$ again contains many points of $P^*$, we would like
to partition them further by a third polynomial $h$ with
$\dim Z(f,g,h)\le d-3$, and so on.

This program encounters several technical difficulties,
and so far it has been realized only up to the second partitioning
polynomial $g$ in \cite{zahl3d} and \cite{KMSS} (also see
\cite{zahl-complSzT}). 

Our main result is the following multilevel partition theorem.

\begin{theorem}\label{t:multilev} 
For every integer $d>1$ there is a 
constant $K$ such that the following hold. Given
an $n$-point set $P\subset \R^d$ and a parameter $r>1$, there 
are numbers $r_1,r_2,\ldots,r_d\in [r,r^K]$, positive integers $t_1,t_2, \ldots, t_d$, a partition
\[
P=P^*\cup \bigcup_{i=1}^d \bigcup_{j=1}^{t_i} P_{ij}
\]
of $P$ into disjoint subsets, and for every $i,j$,
a connected set  $S_{ij}\subseteq\R^d$ containing $P_{ij}$,
such that $|P_{ij}|\le n/r_i$ for all $i,j$, $|P^*|\le r^K$,
and the following hold:
\begin{enumerate}
\item[\rm(i)] If $h\in\R[x_1,\ldots,x_d]$ is a polynomial of degree
bounded by a constant $D_0$, and $X=Z(h)$ is its zero set, then,
 for every $i=1,2,\ldots,d$, the number of the $S_{ij}$ crossed 
by $X$ is at most $O\left(r_i^{1-1/d}\right)$,
with the implicit constant also depending on~$D_0$.
\item[\rm(ii)]
If  $X$ is an algebraic variety in $\R^d$ of dimension
at most $k\le d-2$ defined by polynomials 
of degree bounded by a constant $D_0$, 
then, for every $i=1,2,\ldots,d$, the number of the $S_{ij}$ crossed
by $X$ is bounded by $O\left(r_i^{1-1/(k+1)}\right)$.
\end{enumerate}
\end{theorem}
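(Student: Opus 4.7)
The plan is to build the promised partition by $d$ nested rounds of polynomial partitioning, each round peeling off one dimension. For $i=1,\ldots,d$ I produce a polynomial $f_i$ of degree $D_i=O(r_i^{1/(d-i+1)})$ and set $V^{(0)}:=\R^d$ and $V^{(i)}:=V^{(i-1)}\cap Z(f_i)$. The sets $S_{ij}$ at level $i$ are taken to be the connected components of $V^{(i-1)}\setminus Z(f_i)$, viewed as subsets of $\R^d$, and the corresponding $P_{ij}$ are their intersections with the exceptional set $P_{i-1}^*$ carried over from the previous level, where $P_0^*:=P$ and $P_i^*:=P_{i-1}^*\cap Z(f_i)$; the final $P^*:=P_d^*$ becomes the exceptional set of the theorem.

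Level~$1$ is handled directly by the Guth--Katz theorem with $r_1=r$. For levels $i\geq2$ the first technical ingredient I would prove is a variety version of polynomial partitioning: given a real algebraic variety $W\subset\R^d$ of dimension $m$ defined by polynomials of constant degree, $n'$ marked points on $W$, and a parameter $s>1$, there is a polynomial $g$ of degree $O(s^{1/m})$, not vanishing identically on any top-dimensional irreducible component of $W$, such that every connected component of $W\setminus Z(g)$ contains at most $n'/s$ marked points. This follows from the fact that the space of polynomials of degree $\leq D$ restricted to $W$ has dimension $\Theta(D^m)$, combined with the Stone--Tukey polynomial ham-sandwich theorem on the Veronese image and a generic-perturbation step to guarantee non-vanishing on each top-dimensional component. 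Applied with $W=V^{(i-1)}$, $s=r_i\geq r$, and $n'=|P_{i-1}^*|\leq n$, this yields $f_i$. Since $f_i$ vanishes on no top-dimensional component of $V^{(i-1)}$, we have $\dim V^{(i)}\leq d-i$, and after $d$ rounds $V^{(d)}$ is zero-dimensional; a Bezout-type intersection bound gives $|P^*|\leq\prod_{i=1}^d D_i^{O(1)}=O(r^K)$ for some constant $K=K(d)$.

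For the crossing bounds, fix a variety $X\subset\R^d$ of dimension $k$ defined by polynomials of degree $\leq D_0$ and a level~$i$. If $X\supseteq V^{(i-1)}$, then $X$ contains every $S_{ij}$ at level $i$ and crosses none. Otherwise $X\cap V^{(i-1)}$ is a subvariety of $V^{(i-1)}$ of dimension $k':=\dim(X\cap V^{(i-1)})\leq\min(k,d-i)$, and the Barone--Basu estimate applied inside $V^{(i-1)}$ bounds the number of connected components of $(X\cap V^{(i-1)})\setminus Z(f_i)$ by $O(D_i^{k'})$, which is an upper bound on the number of $S_{ij}$ that $X$ meets and hence may cross. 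Substituting $D_i=O(r_i^{1/(d-i+1)})$ gives $O(r_i^{\min(k,d-i)/(d-i+1)})$; a short case analysis on whether $k\leq d-i$ or $k>d-i$ then shows this is $O(r_i^{1-1/d})$ for $k=d-1$, proving (i), and $O(r_i^{1-1/(k+1)})$ for $k\leq d-2$, proving (ii).

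The main obstacle is handling iterated varieties $V^{(i-1)}$ that fail to be equidimensional, which is precisely the technical bottleneck that has prevented the literature from pushing past a second partitioning polynomial. My remedy is to decompose $V^{(i-1)}$ into its equidimensional strata, apply the variety partitioning lemma to each stratum separately with a possibly larger parameter (this is where the polynomial slack $r_i\in[r,r^K]$ enters), and take $f_i$ to be the product of the resulting polynomials. The degree budget is preserved up to constants depending only on $d$, and the Barone--Basu crossing estimate is applied stratum by stratum and then summed. Keeping all of this within a constant-factor-in-$d$ blowup, while ensuring the Bezout-type bound on $|P^*|$ survives the product construction, is the delicate part of the argument.
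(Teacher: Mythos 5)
Your overall architecture (iterated partitioning over a shrinking variety, cells taken as connected components of $V^{(i-1)}\setminus Z(f_i)$, Barone--Basu for the crossing bounds, geometrically growing parameters $r_i$) is the same as the paper's, but the load-bearing step is exactly the one you dispose of in passing: ensuring that the level-$i$ polynomial vanishes identically on \emph{no} top-dimensional component of $V^{(i-1)}$. Your ``generic-perturbation step'' does not achieve this. The ham-sandwich/dimension-count argument only produces a $g$ whose restriction to $W$ is nonzero; if $g$ happens to vanish identically on a component $W_1$ carrying most or all of the points, the partitioning property of $g$ says nothing about those points, and after an arbitrarily small perturbation they fall into cells of $W\setminus Z(g')$ of uncontrolled size---being a $\frac{1}{r}$-partitioning polynomial is not a generic or open condition (cells can merge under arbitrarily small perturbations), so one cannot perturb within it. This is precisely the obstruction that stopped earlier work at two levels; the paper's key lemma (Lemma~\ref{l:key}) gets around it by a different mechanism, namely a Noether-normalization projection $\pi\colon\C^d\to\C^k$ with $\pi(V_j)=\C^k$ for \emph{every} irreducible component, partitioning $\pi(Q)$ in $\R^k$ by plain Guth--Katz and pulling back, and some such idea is missing from your plan. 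Compounding this, you run the recursion with real varieties, which removes the tools that make the iteration close: over $\R$ there is no B\'ezout-type control of $\deg V^{(i)}$, of the number of its components, or of the degrees of polynomials defining them, and intersecting an irreducible real variety with a hypersurface need not drop its dimension by exactly one. The paper works over $\C$ (Lemma~\ref{l:gbezout}, Theorem~\ref{t:heintz-def}, Lemma~\ref{l:dims}) exactly for these reasons, and as a byproduct equidimensionality is preserved automatically, so the non-equidimensional situation you propose to repair by stratification never arises; your stratify-and-multiply patch also breaks the degree budget on its own terms, since a low-dimensional stratum (say of dimension $1$) forces degree about $r_i$ rather than $r_i^{1/(d-i+1)}$, which then ruins the level-$i$ crossing bound.

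The crossing argument has two further gaps. First, the dichotomy ``either $X\supseteq V^{(i-1)}$ or $\dim(X\cap V^{(i-1)})\le d-i$'' is false: $X$ may contain the real points of some but not all components of $V^{(i-1)}$, in which case $X\cap V^{(i-1)}$ has dimension $d-i+1$ and your bound $O\left(D_i^{k'}\right)$ collapses. The paper handles exactly this case by restricting to the components $U_\ell$ with $U_\ell\cap\R^d\not\subseteq X$ and charging each crossed cell, via a path inside the cell and its first intersection point with $X$, to a connected component of $(U_\ell\cap\R^d\cap X)\setminus Z(g_i)$; making that quantitative requires Theorem~\ref{t:heintz-def} (defining polynomials of degree at most $\deg U_\ell\le\Delta_{i-1}$) together with Lemma~\ref{l:dims}. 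Second, Barone--Basu (Theorem~\ref{t:basu}) applied to the variety defined by $h,f_1,\ldots,f_{i-1}$ gives $O(1)^d D^{d-k'}E^{k'}$ where $D$ bounds the degrees of the defining polynomials, and for $i\ge 2$ these are powers of $r$, not constants; the resulting extra factor (a power of $r_{i-1}$, i.e.\ of $\Delta_{i-1}$ in the paper's notation) must be absorbed using the strict exponent slack $1-1/(d-i+1)<1-1/d$ together with the choice $r_i=r_{i-1}^c$ for a large constant $c$---a step absent from your exponent bookkeeping, which treats the defining degrees as constant.
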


We will need only part (i), while part (ii) is stated for possible future
use, since it can be handled with very little extra work.

\heading{Related work. } 
The problem concerning the exceptional set $P^*$
in a single-level polynomial partition has been
addressed in various ways in the literature.

In one of the theorems in Agarwal et al. \cite{AMSII},  $P^*$ is forced to
be at most of a constant size, by an infinitesimal perturbation of $P$.
However, this strategy cannot be used
in incidence problems, for example, 
where a perturbation destroys the structure
of interest. Moreover, for algorithmic purposes, known methods
of infinitesimal perturbation are applicable with a reasonable
overhead only for constant values of~$r$.

Solymosi and Tao \cite{SoTa} handle the exceptional set
 essentially by projecting it to a hyperplane. This yields
a $(d-1)$-dimensional problem,  which is handled recursively.
Their method allows them to deal only with constant values of $r$, 
and consequently it
yields bounds that are suboptimal by factors of $n^\eps$ (where $\eps>0$
is arbitrarily small but fixed number). 

Another variant of the strategy  of projecting $P^*$ to a hyperplane
was used in  \cite{AMSII}; there $r$ could be chosen as a small
but fixed power of $n$, leading to only polylogarithmic extra factors,
as opposed to $n^\eps$ with constant~$r$. However, the resulting algorithm
and proof are complicated, since one has to keep track
of several parameters and solve a tricky recursion.

Our proof of Theorem~\ref{t:multilev} also involves a projection trick,
but the projection is encapsulated in the proof and simple to analyze, and 
in applying the theorem we can work in the original space all the time.

In this paper we apply an algorithmic enhancement of Theorem~\ref{t:multilev},
stated below, to recover the main result of Agarwal et al.~\cite{AMSII}
in a way that is simpler both conceptually and technically.


While this paper was in preparation, two groups of researchers
announced results concerning multilevel polynomial partitions,
which partially overlap with ours. 
Fox, Pach, Sheffer, Suk, and Zahl \cite{fox-pach-sheffer-suk-zahl} as well as Basu and Sombra \cite{basu-sombra}, 
obtained results similar to our key lemma (Lemma~\ref{l:key}),
but with different proofs. However, the
Basu--Sombra result  works just for varieties of codimension two and hence it cannot be used for our range searching algorithm.
On the other hand, Fox et al. have no restriction on the dimension of the variety, but they have to assume the variety is irreducible. 
The important feature of our method is that we are able to avoid computing irreducible components which is crucial from algorithmic point of view. 
For more details we refer to the discussion in Section \ref{s:irred}.


\heading{Range searching with semialgebraic sets. }
Here we consider a basic and long-studied question in computational
geometry. 

Let $P$ be a set of $n$ points in $\R^d$
and let $\Gamma$ be a family of geometric ``regions,''
called \emph{ranges}, in $\R^d$. For example, $\Gamma$ can be the set of all
axis-parallel boxes, balls, simplices, or cylinders, or the set of all
intersections of pairs of ellipsoids. In the \emph{$\Gamma$-range searching}
problem, we want to preprocess $P$ into a data structure so that
the number of points of $P$ lying in a query range $\gamma \in \Gamma$ can be
counted efficiently. More generally, we may be given a weight function 
on the points in $P$ and we ask for the cumulative weight of the 
points in $P\cap \gamma$ (our result applies in this more general
setting as well). We consider the \emph{low-storage} variant of
$\Gamma$-range searching, where the data structure  is allowed
to use only linear or near-linear storage, and the goal is to
make the query time as small as possible.

We study \emph{semialgebraic range searching}, where $\Gamma$ is a set
of constant-complexity semialgebraic sets. We recall 
that a \emph{semialgebraic set} is a subset of $\R^d$ obtained
from a finite number of sets of the form $\{x\in\R^d\mid g(x)\ge 0\}$, 
where $g$ is a $d$-variate polynomial with integer coefficients,
by Boolean operations (unions, intersections,
and complementations). Specifically, let $\Gamma_{d,D,s}$ denote
the family of all semialgebraic sets in $\R^d$ defined
by at most $s$ polynomial inequalities of degree at most $D$ each.
By \emph{semialgebraic range searching} we mean
$\Gamma_{d,D,s}$-range searching for some parameters $d,D,s$.

This problem and various special cases of it have been studied
in many papers. We refer to \cite{ae-grsr-97,m-grs-95} for background
on range searching and to \cite{AMSII} for a more detailed discussion
of the problem setting and previous work.

The main result of \cite{AMSII} is as follows.

\begin{theorem}\label{t:large-r}
Let $d,D_0,s$, and $\eps>0$ be constants.
Then the $\Gamma_{d,D_0,s}$-range searching problem for an arbitrary
$n$-point set in $\R^d$
can be solved with $O(n)$ storage, $O\left(n^{1+\eps}\right)$ expected
preprocessing time, and $O\left(n^{1-1/d}\log^B n\right)$ query time, where
$B$ is a constant depending on $d,D_0,s$ and~$\eps$.
\end{theorem}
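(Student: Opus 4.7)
The plan is to construct a hierarchical partition tree of $P$ via recursive applications of Theorem~\ref{t:multilev} (in its algorithmic form, stated elsewhere in the paper), and to answer a query by a top-down traversal. At the root, invoke the theorem with a parameter $r$ (a small positive power of $n$ to be calibrated against $\eps$) to obtain subsets $P_{ij}$, their connected supersets $S_{ij}$, and the exceptional set $P^*$ of size at most $r^K$. Store at the root the total weight of $P$, constant-size descriptions of each $S_{ij}$ together with pointers to subtrees built recursively for each $P_{ij}$, and the points of $P^*$ (either listed explicitly or handled by a recursive sub-call if still large). The recursion terminates once the subset size drops below a constant, where we simply store the points.

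To answer a query with $\gamma\in\Gamma_{d,D_0,s}$, traverse the tree: at each visited node, for each child $S_{ij}$ decide whether $S_{ij}\subseteq\gamma$ (add the stored weight), $S_{ij}\cap\gamma=\emptyset$ (skip), or $\partial\gamma$ crosses $S_{ij}$ (recurse into that subtree), and check each exceptional point for membership in $\gamma$ individually. All primitive tests run in $O(1)$ time, because both $\gamma$ and the $S_{ij}$ are semialgebraic of constant description complexity.

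The key quantitative step is the crossing bound. The boundary $\partial\gamma$ lies in the union of at most $s$ zero sets $Z(g_1),\dots,Z(g_s)$ of polynomials of degree at most $D_0$, so by part~(i) of Theorem~\ref{t:multilev}, each $Z(g_j)$ crosses only $O(r_i^{1-1/d})$ of the sets $S_{ij}$ at level $i$. Since every crossed $S_{ij}$ contains at most $n/r_i\le n/r$ points, this yields the recurrence
\[
T(n)\le\sum_{i=1}^{d} c\, r_i^{1-1/d}\,T(n/r_i)+O(r^K).
\]
Substituting the ansatz $T(n)=An^{1-1/d}\log^B n$ and using $r=n^\delta$, the right-hand side becomes at most $C'An^{1-1/d}(1-\delta)^B\log^B n + O(r^K)$, and the constraint $C'(1-\delta)^B\le 1$ is met by taking $B=O((\log C')/\delta)$. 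Choosing $\delta$ small in terms of $\eps$ (and $B$ correspondingly large) closes the recurrence, yielding the query bound $O(n^{1-1/d}\log^B n)$.

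Storage is linear because the $P_{ij}$ together with $P^*$ partition $P$ at each node and the per-node bookkeeping is $O(1)$. For preprocessing, one invocation of the algorithmic Theorem~\ref{t:multilev} runs in time $O(n\cdot\mathrm{poly}(r))=O(n^{1+O(\delta)})$; since the subproblem sizes sum to at most $n$ at each level of the tree and the number of levels grows only slowly in $n$, the total preprocessing is $O(n^{1+\eps})$ for sufficiently small $\delta$. The main technical obstacle is calibrating $r$ (equivalently $\delta$) so that the recurrence closes with a polylogarithmic, rather than $n^{\eps}$, overhead in the query time while simultaneously keeping preprocessing near-linear---this is why $r$ must be taken as a small root of $n$, not merely a large constant, and why the statement of Theorem~\ref{t:multilev} must be reinforced by an efficient algorithmic counterpart.
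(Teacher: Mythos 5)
Your proposal is correct and follows essentially the same route as the paper: a recursive partition tree built from the algorithmic partition theorem (Theorem~\ref{t:algo-multilev}) with $r$ a small fixed power of $n$, query answering by the three-way test (contained / disjoint / crossed, recursing only on crossed cells), the crossing bound reducing $\gamma$ to its at most $s$ defining zero sets, and the same recurrences for preprocessing and query time. One small inaccuracy: the $S_{ij}$ are \emph{not} of constant description complexity---by Theorem~\ref{t:algo-multilev} they are defined by $O(r^C)$ polynomials of degree $O(r^C)$, so each primitive test costs $r^{O(1)}$ (in the paper, via Theorem~\ref{t:make-arrg}) rather than $O(1)$; this is harmless, since the total per-node cost is still polynomial in $r=n^{\delta}$ and is absorbed into the additive term of your recurrence exactly as in the paper.
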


As announced, here we provide a new and simpler proof. 
%
%
Basically we apply Theorem~\ref{t:multilev}, but for the algorithmic
application, we need to amend it with an algorithmic part, essentially
asserting that the construction in Theorem~\ref{t:multilev} can be
executed in time depending polynomially on $r$ and linearly on $n$
(we again stress that $d$ is taken as a constant). Moreover, we
need that the $S_{ij}$ can be handled algorithmically---they are
semialgebraic sets of controlled complexity. 
We will use the real RAM model of computation where
we can compute exactly with arbitrary real numbers and each arithmetic operation is executed in
unit time.

A precise statement is as follows.

\begin{theorem}[Algorithmic enhancement of Theorem~\ref{t:multilev}]
\label{t:algo-multilev}
Given 
$P\subset \R^d$ and $r$ as in Theorem~\ref{t:multilev}, 
one can compute the sets $P^*$, $P_{ij}$, and $S_{ij}$ in time $O\left(nr^C\right)$,
where $C=C(d)$ is a constant. Moreover, for every $i$, the number
$t_i$ of the $P_{ij}$ is $t_i=O\left(r^C\right)$, and each $S_{ij}$ is a semialgebraic
 set 
defined by at most $O\left(r^C\right)$ polynomial inequalities of maximum degree
$O\left(r^C\right)$. For every $i=1,2,\ldots,d$,
every range $\gamma\in \Gamma
_{d,D_0,s}$ crosses at most
$O\left(r_i^{1-1/d}\right)$
of the $S_{ij}$, with the constant of proportionality
depending on $d,D_0,s
$.
\end{theorem}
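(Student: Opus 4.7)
The plan is to revisit the proof of Theorem~\ref{t:multilev} (which the algorithmic version mirrors) and equip every step with explicit complexity bounds, then establish the crossing bound for general semialgebraic ranges. The central ingredient is an algorithmic version of the Guth--Katz partitioning polynomial: a $\frac1r$-partitioning polynomial of degree $O(r^{1/d})$ for an $n$-point set in $\R^d$ can be constructed in time $O(n r^C)$ using, e.g., the algorithm based on polynomial ham-sandwich via moment curves combined with Veronese lifting and linear algebra; classifying a point against a degree-$D$ polynomial costs $O(D^d)$ time on the real RAM. Given this building block, I would trace through the construction that produces the $P_{ij}$, $S_{ij}$, and exceptional set $P^*$ inductively and certify that every operation is either a polynomial construction on at most $n$ points (cost $O(n r^C)$) or a manipulation of semialgebraic sets of complexity $\text{poly}(r)$ (cost $\text{poly}(r)$, absorbed into $O(n r^C)$ since we invoke it $\text{poly}(r)$ many times).

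Concretely, at level $i$ of the construction we have points lying in a variety $V$ of dimension $d-i+1$ produced at the previous level, and we construct a partitioning polynomial $f_i$ of degree $O(r^{1/d})$ on~$V$. The $S_{ij}$ at this level are obtained as sign-cells of the polynomials $f_1,\ldots,f_i$ intersected with the ambient variety defined by the zero sets of $f_1,\ldots,f_{i-1}$. Each such cell is a semialgebraic set defined by $O(i) = O(d)$ polynomial inequalities of degrees $O(r^{1/d})$, and after taking unions of at most $O(r^C)$ such cells (one per connected component) we still obtain semialgebraic sets described by $O(r^C)$ inequalities of degree $O(r^C)$, matching the stated bound. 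The projection trick from the proof of Theorem~\ref{t:multilev} avoids having to compute irreducible components (cf. Section~\ref{s:irred}); this is what keeps the algorithm polynomial in $r$, since irreducible decomposition is the main algorithmic bottleneck that the proof must sidestep. Counting along the $d$ levels and the $O(r^C)$ cells per level gives the total bound $t_i = O(r^C)$ and preprocessing time $O(n r^C)$, where assignment of each input point to its cell is a $\text{poly}(r)$ sign evaluation.

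For the crossing bound applied to ranges $\gamma \in \Gamma_{d,D_0,s}$, note that the topological boundary $\partial\gamma$ is contained in the zero set of a single polynomial (the product of the defining polynomials) of degree at most $D_0 s = O(1)$, and $\gamma$ crosses $S_{ij}$ only if $\partial\gamma$ intersects $S_{ij}$, since each $S_{ij}$ is connected. Hence the problem reduces to bounding, for each $i$, the number of $S_{ij}$ met by an algebraic variety of constant-bounded degree, which is exactly what part~(i) of Theorem~\ref{t:multilev} delivers: the bound is $O(r_i^{1-1/d})$ with the constant of proportionality depending on $d,D_0,s$ (through $D_0 s$). This step is essentially a reformulation.

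The main obstacle I anticipate is the bookkeeping in level $i$: we must produce a description of each $S_{ij}$ as a semialgebraic set of the claimed complexity, in the ambient space $\R^d$ rather than inside $V$, and simultaneously ensure that (a) the union $\bigcup_j S_{ij}$ covers the relevant part of $P$ and each $S_{ij}$ is connected, and (b) the sign-condition encoding of cells is compatible with the projection trick used in the proof. Standard tools from real algebraic geometry (Collins-style cylindrical algebraic decomposition or the algorithms in \cite{BasuPollackRoy-book} for computing sample points in sign-invariant cells) give the required cell representations in time $(sD)^{O(d)}$ with $s=O(r^C)$, $D=O(r^{1/d})$, which is subsumed by $O(nr^C)$. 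Once this bookkeeping is in place, the bound on $t_i$ and the query-crossing bound follow by the same arguments as in Theorem~\ref{t:multilev}.
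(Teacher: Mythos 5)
Your overall route is the paper's route (make each level of the construction from Theorem~\ref{t:multilev} algorithmic, obtain the $S_{ij}$ via the arrangement algorithm of Theorem~\ref{t:make-arrg}, and reduce the crossing bound for ranges to the hypersurface bound), but there are two concrete gaps. The main one: you treat the per-level step ``construct a partitioning polynomial $f_i$ on the variety $V$'' as a black box, saying only that the projection trick avoids irreducible decomposition. But algorithmically this step \emph{is} the crux of the theorem. The key lemma needs a linear projection $\pi\:\C^d\to\C^k$ with $\pi(V_j)=\C^k$ for \emph{every} irreducible component $V_j$ of $V$, and one must find and certify such a $\pi$ in time polynomial in $r$ without ever computing the components. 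The paper does this by composing random coordinate changes with standard projections and verifying, for each stage, the hypothesis of the projection theorem (Theorem~\ref{t:proj}) by computing the elimination ideal via a lexicographic Gr\"obner basis; this in turn rests on the nontrivial fact (Theorem~\ref{t:groebner}, justified via the K\"uhnle--Mayr method) that for fixed $d$ a Gr\"obner basis can be computed in time polynomial in the total degree of the input. Your proposal never says how the projection is chosen, how its validity is checked (needed for a Las Vegas guarantee and for the dimension/degree bookkeeping that the crossing bounds depend on), or why the necessary elimination computations run in $\mathrm{poly}(r)$ time; without this, the $O(nr^C)$ bound for the construction is not established. (A small related slip: at level $i$ the degree of the partitioning polynomial is $O\bigl(r_i^{1/(d-i+1)}\bigr)$, not $O(r^{1/d})$; this is harmless for the $\mathrm{poly}(r)$ claims.)

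The second gap is in your crossing-number argument. You reduce to the single hypersurface $X=Z(h_1\cdots h_s)$ and invoke Theorem~\ref{t:multilev}(i), but part~(i) bounds the number of $S_{ij}$ \emph{crossed} by $X$ (intersected but not contained), not the number merely met by $X$. A range $\gamma$ can cross an $S_{ij}$ that is entirely contained in some $Z(h_\ell)$ (and hence in $Z(h_1\cdots h_s)$, which then does not cross it), so ``$\partial\gamma$ meets $S_{ij}$'' does not put you in the situation covered by part~(i); and the number of $S_{ij}$ merely intersected by a hypersurface can be as large as $t_i=\mathrm{poly}(r)$, far above $O\bigl(r_i^{1-1/d}\bigr)$. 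The correct (and simple) argument, used in the paper, is that if $\gamma$ crosses a connected set $A$ then at least one individual variety $Z(h_\ell)$ crosses $A$: otherwise each $h_\ell$ either avoids $A$ (hence has constant sign on it) or vanishes on all of $A$, so $A$ lies entirely inside or outside each $\gamma_\ell$ and hence inside or outside $\gamma$. Applying Theorem~\ref{t:multilev}(i) to each $h_\ell$ separately and summing gives the stated bound with an extra factor $s$ absorbed into the constant.
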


\section{Algebraic preliminaries}\label{s:prelim}

Throughout the paper we assume that we are working in the 
\emph{Real RAM} model of
computation, where arithmetic operations with arbitrary real numbers
can be performed exactly and in unit time. This is the most usual
model in computational geometry.

We could also consider the bit model (a.k.a. Turing machine model),
assuming the input points rational or, say, algebraic. Then the
analysis would be more complicated, but we believe that,
with sufficient care, bounds analogous to those we obtain
in the Real RAM model can be derived as well, with an extra multiplicative
term polynomial in the bit size of the input numbers.
For example, the algorithms of real algebraic geometry we use
are also analyzed in the bit model in \cite{BasuPollackRoy-book},
and the polynomiality claims we rely on still hold. However,
at present we do not consider this issue sufficiently important
to warrant the additional complication of the paper.

\heading{Notions and tools from algebraic geometry over $\C$.}
A \emph{real algebraic variety} $V$ is a subset of some $\R^d$
that can be expressed as $V=Z(f_1,\ldots,f_m)$,
i.e., the set of common zeros 
of finitely many polynomials  $f_1, \ldots, f_m \in 
\R[x_1,\ldots,x_d]$. For a \emph{complex algebraic variety},
$\R$ is replaced with $\C$ (the complex numbers).\footnote{More precisely,
these are \emph{affine} algebraic varieties, while other kinds
of algebraic varieties, such as projective or quasiprojective
ones, are often considered in the literature. Here, with a single
exception, we suffice with the affine case.}

As in the introduction, we will use $Z(f)$ for the real zeros of
a (real) polynomial $f\in\R[x_1,\ldots,x_d]$, while 
$Z_\C(f)$ is the set of all zeros of a complex or real polynomial
in $\C^d$. For a real polynomial $f$ we have $Z(f) = Z_\C(f) \cap \R^d$.

A nonempty complex variety $V$ is called \emph{irreducible}  
if it cannot be written as the union of two proper complex subvarieties,
and similarly for real varieties.
The empty set is not considered to be irreducible.
Note that $Z(f)$ can be irreducible over $\R$ even if $Z_\C(f)$ is reducible
over $\C$.
An easy example is the variety $V(x^2+y^2)$.
It is well known that every nonempty variety can be uniquely decomposed 
into a finite number of irreducible components, none containing another.

For a complex variety $V$, we will use the 
notions of \emph{dimension} $\dim V$ and \emph{degree} $\deg V$.
These can be defined in several equivalent ways. We
refer to the literature such as \cite{CLO,harris,hartshorne}, for rigorous treatment.
Here we just recall a rather intuitive definition and state the
properties we will actually use.

The dimension of $V\subseteq\C^d$ can be defined
as the largest $k$ such that a generic
$(d-k)$-dimensional complex affine subspace $F$ of $\C^d$ intersects $V$
in finitely many points, and the degree is the number
of intersections (which is the same for all generic $F$).
To explain the meaning of ``generic'', let us consider only
the subspaces $F=F(a)$ that can be expressed by the equations
$x_{i+d-k}=a_{i0}+\sum_{j=1}^{d-k} a_{ij}x_j$, $i=1,\ldots,k$,
for some $a=(a_{ij})_{i=1}^k{}_{j=0}^{d-k}\in \C^{k(d-k+1)}$.
The $F(a)$ being generic means that the point
$a$ does not lie in the zero set of a certain nonzero polynomial
(depending on $V$).
In particular,  almost all subspaces $F$ in the sense of measure are generic.
We note that the dimension of $\C^d$ is $d$ and its degree is 1.

If $V=Z_\C(f)$ is the zero set of a single squarefree polynomial, then 
$\deg V=\deg f$. We will always assume that the polynomials
we deal with are squarefree.

For a real algebraic variety $V$, the definition with a generic
affine subspace does not quite make sense, and in real algebraic
geometry, the dimension is usually defined, for the
more general class of semialgebraic sets,  as the
largest $k$ such that $V$ contains the image
of a $k$-dimensional open cube under an injective
semialgebraic map; see \cite{BochnakCosteRoy,BasuPollackRoy-book}.
An equivalent way of defining the dimension of a real algebraic
variety $V$ uses the Krull dimension\footnote{The Krull dimension
of a ring $R$ is the largest $n$
such that there exists a chain $I_0\subsetneq I_1\subsetneq\cdots
\subsetneq I_n$ of nested prime ideals in~$R$.} 
of the \emph{coordinate ring} $\R[x_1,\ldots, x_d]/I(V)$, 
where $I(V)$  is the ideal
of all real polynomials vanishing on $V$;
 see 
\cite[Cor.~2.8.9]{BochnakCosteRoy} for this equivalence.
For complex case the dimension defined via generic affine subspaces
coincides with the Krull dimension of the coordinate ring $\C[x_1,\ldots, x_d]/I_\C(V)$;
see \cite[Chapter 11]{harris}.

We will need the following fact, which is apparently standard
(for example, it is mentioned without proof as Remark~13 in \cite{newBB}),
although so far we have not been able to locate an explicit
reference (Whitney \cite[Lemma~8]{whit-realvar} proves a similar statement,
but he uses definitions that are not standard in the current literature).

\begin{lemma}\label{l:dims}
 Let $V \subseteq \C^d$ be a complex
variety. Then $V\cap\R^d$ is a real variety and
$\dim (V \cap \R^d) \leq \dim V$.
\end{lemma}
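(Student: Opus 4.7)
The plan is to handle the set-theoretic claim and the dimension bound separately. For the first, write $V = Z_\C(f_1,\ldots,f_m)$ and decompose each defining polynomial as $f_j = g_j + i h_j$ with $g_j, h_j \in \R[x_1,\ldots,x_d]$ the real and imaginary parts of its coefficients. Since all monomials $x^\alpha$ evaluate to real numbers at any $x \in \R^d$, the equation $f_j(x)=0$ is equivalent to $g_j(x) = h_j(x) = 0$, and therefore $V \cap \R^d = Z(g_1, h_1, \ldots, g_m, h_m)$ is a real algebraic variety.

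For the dimension inequality, the central idea is to introduce an intermediate object: let $V'$ denote the complex Zariski closure of $V \cap \R^d$ in $\C^d$. Because $V$ is complex Zariski closed and contains $V \cap \R^d$, we have $V' \subseteq V$, so by monotonicity of dimension under inclusion (immediate from the Krull characterization, since $I_\C(V) \subseteq I_\C(V')$ makes the coordinate ring of $V'$ a quotient of that of $V$), $\dim_\C V' \leq \dim_\C V$. It therefore suffices to prove $\dim_\R(V \cap \R^d) \leq \dim_\C V'$; I in fact expect equality.

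Set $J := I(V \cap \R^d) \subseteq \R[x_1,\ldots,x_d]$. A complex polynomial $f = g + i h$ with $g, h$ real vanishes on the real set $V \cap \R^d$ iff both $g$ and $h$ do, so $I_\C(V') = I_\C(V \cap \R^d) = J \cdot \C[x_1,\ldots,x_d]$. Consequently the complex coordinate ring of $V'$ is isomorphic to $(\R[x_1,\ldots,x_d]/J) \otimes_\R \C$, and invoking the two Krull-dimension characterizations quoted in the excerpt gives
\[
\dim_\R(V \cap \R^d) = \dim_{\mathrm{Krull}}(\R[x_1,\ldots,x_d]/J),
\quad
\dim_\C V' = \dim_{\mathrm{Krull}}\bigl((\R[x_1,\ldots,x_d]/J) \otimes_\R \C\bigr).
\]
The equality of these two numbers is the standard fact that Krull dimension of a finitely generated algebra over a field is invariant under base change: a Noether normalization $\R[y_1,\ldots,y_n] \hookrightarrow \R[x_1,\ldots,x_d]/J$ (finite) base-changes to a finite inclusion $\C[y_1,\ldots,y_n] \hookrightarrow (\R[x_1,\ldots,x_d]/J) \otimes_\R \C$, so both rings have Krull dimension $n$.

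The only genuine subtlety---rather than a substantive obstacle---is the necessity of routing through $V'$ instead of comparing $V$ directly with $V \cap \R^d$. The example $V = Z_\C(x^2+y^2) \subset \C^2$ illustrates the point: $V \cap \R^2 = \{0\}$, so $I_\C(V) = (x^2+y^2)$ is strictly smaller than $I_\C(V \cap \R^2) = (x,y)\,\C[x,y]$, and the ideals of $V$ and $V \cap \R^d$ cannot be related in a useful way. Replacing $V$ by the complex Zariski closure of its real points is precisely the sharpening that makes the ideal-theoretic calculation close up.
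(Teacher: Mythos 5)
Your proof is correct, but it takes a genuinely different route from the paper's. Both arguments start from the same elementary observation---a complex polynomial vanishes at a real point iff its real and imaginary parts do (for the set-theoretic claim the paper uses the equivalent $f\overline f$ trick, you use real and imaginary parts; either works)---but they diverge on the dimension bound. The paper compares Hilbert functions directly: for each degree $D$, a real basis of the real polynomials of degree at most $D$ vanishing on $V\cap\R^d$ complex-spans a space containing all complex polynomials of degree at most $D$ vanishing on $V$, so the Hilbert function of $V$ dominates that of $V\cap\R^d$, and comparing growth rates gives $\dim(V\cap\R^d)\le\dim V$ with no further machinery. You instead pass to the complex Zariski closure $V'$ of $V\cap\R^d$, identify $I_\C(V')$ as the complexification $J\cdot\C[x_1,\ldots,x_d]$ of the full real vanishing ideal $J=I(V\cap\R^d)$, deduce that the coordinate ring of $V'$ is $(\R[x_1,\ldots,x_d]/J)\otimes_\R\C$, and invoke invariance of Krull dimension under the base change $\R\subseteq\C$ (via Noether normalization and preservation of finiteness and injectivity by flatness), finishing with the monotonicity $V'\subseteq V$. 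Your route buys the sharper standard fact $\dim_\R(V\cap\R^d)=\dim_\C V'$, at the cost of importing Noether normalization and base-change invariance; the paper's argument is more self-contained, being degree-by-degree linear algebra resting only on the Hilbert-function characterization of dimension. Both proofs rely on the equivalence, quoted in Section~\ref{s:prelim}, between the semialgebraic dimension of a real variety and the Krull dimension of its real coordinate ring, and you correctly work with the full vanishing ideal $J$ rather than the ideal generated by the defining polynomials---exactly the point where the failure of a real Nullstellensatz (your $x^2+y^2$ example) would otherwise derail the calculation.
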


This is perhaps not as obvious as it may seem, because
if we identify $\C^d$ with $\R^{2d}$ in the usual way,
then topologically, a $k$-dimensional complex variety $V$
has (real) dimension~$2k$.

\begin{proof}[Sketch of proof]
If $V=Z_\C(f_1,\ldots,f_m)$ for $f_1,\ldots,f_m\in\C[x_1,\ldots,x_d]$, then 
\[
V\cap\R^d=Z(f_1\overline f_1,\ldots,f_m\overline f_m),
\]
where the bar denotes complex conjugation. Each
$f_i\overline f_i$ is a real polynomial, and so $V\cap\R^d$
is a real variety.

The inequality for the dimensions can be checked, for example,
by employing the definition of the dimensions via the Hilbert
function (see, e.g.,  \cite{CLO}), 
which is well known to be equivalent to the Krull dimension
definition. Indeed, if $f \in \C[x_1,\ldots,x_d]$ is a complex polynomial 
of degree at most $D$ vanishing
on $V$, we can write $f=f_1+if_2$,  where $f_1, f_2 \in \R[x_1,\ldots,x_d]$ correspond to the real and complex parts of coefficients of $f$, respectively. Then $\deg f_1$ and $\deg f_2$ are at most $D$
and both $f_1$ and $f_2$ vanish on $V\cap\R^d$.
Therefore, if $(g_1,\ldots,g_m)$ is
a basis of the real vector space of all real polynomials
of degree at most $D$ vanishing on $V\cap\R^d$, then
the $g_1,\ldots,g_m$, regarded as complex polynomials, generate
the complex vector space of all complex polynomials
of degree at most $D$ vanishing on~$V$. It follows that the Hilbert function
of the complex variety 
$V$ is at least as large as the Hilbert function
of the real variety~$V\cap\R^d$.
\end{proof}

\begin{lemma}[A generalized B\'ezout inequality]\label{l:gbezout}
Let $V\subseteq\C^d$ be an irreducible variety, let $f\in\C[x_1,\ldots,x_d]$
be a polynomial that does not vanish identically on $V$, and let
$W_1,\ldots,W_k$ be the irreducible components of $V\cap Z_\C(f)$.
Then all of the $W_i$ have dimension $\dim (V)-1$, and their degrees
satisfy
\[
\sum_{i=1}^k\deg W_i\le \deg(V)\deg(f).
\]
\end{lemma}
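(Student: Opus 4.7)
My plan is to split the lemma into its dimension claim and its degree claim, and handle them by two well-known tools: Krull's principal ideal theorem for the former, and a generic-slicing reduction to Bézout for plane/space curves for the latter.

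For the dimension claim, since $V$ is irreducible and $f$ does not vanish identically on $V$, the element $\bar f$ of the coordinate ring $\C[V]=\C[x_1,\ldots,x_d]/I_\C(V)$ is nonzero and is not a unit on an open dense subset. By Krull's Hauptidealsatz, every minimal prime of $\C[V]$ containing $\bar f$ has height exactly~$1$. Translating back geometrically, every irreducible component $W_i$ of $V\cap Z_\C(f)$ has codimension exactly~$1$ in~$V$, so $\dim W_i=\dim V -1$. This rules out both the possibility that the intersection is too big (forbidden because $f\not\equiv 0$ on~$V$) and that some component is too small (forbidden by the principal ideal theorem).

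For the degree inequality, set $k=\dim V$ and choose a generic affine subspace $L\subseteq\C^d$ of dimension $d-k+1$. I would use the standard fact that, for generic $L$, the intersection $C:=V\cap L$ is a (possibly reducible) curve of degree $\deg V$; this is exactly how $\deg V$ is characterized in terms of hyperplane sections, iterated $k-1$ times. Moreover, for generic $L$, the intersection $W_i\cap L$ consists of exactly $\deg W_i$ distinct points for each $i$, because each $W_i$ has dimension $k-1$ and $L$ is of complementary dimension $d-(k-1)$; finite unions of such Zariski-open conditions remain Zariski-open, so one single generic $L$ works simultaneously for $V$ and for all $W_1,\ldots,W_k$. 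Since $V\cap Z_\C(f)=\bigcup_i W_i$ and $f\not\equiv 0$ on any component of $C$ (generic $L$ ensures no component of $C$ lies in $Z_\C(f)$), we have
\[
\bigsqcup_{i=1}^k (W_i\cap L)\;\subseteq\; C\cap Z_\C(f),
\]
where the disjointness holds generically because the pairwise intersections $W_i\cap W_j$ have dimension strictly smaller than $\dim W_i=k-1$ and thus miss~$L$. Now $C\cap Z_\C(f)$ is a finite set, and the classical Bézout bound for curves in $\C^d$ (number of intersections of a curve of degree $\delta_1$ with a hypersurface of degree $\delta_2$ is at most $\delta_1\delta_2$, even without multiplicities) yields $|C\cap Z_\C(f)|\le \deg V\cdot \deg f$. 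Combining with the displayed inclusion gives $\sum_i \deg W_i\le \deg V\cdot \deg f$.

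The main obstacle is justifying the genericity step cleanly: one must argue simultaneously that (a) $V\cap L$ is a pure one-dimensional variety of degree $\deg V$, (b) each $W_i\cap L$ has exactly $\deg W_i$ points, (c) distinct $W_i,W_j$ produce disjoint intersections with $L$, and (d) no component of $V\cap L$ is contained in $Z_\C(f)$. Each is a Zariski-open nondegeneracy condition on the parameter space of affine subspaces of dimension $d-k+1$, so their intersection is nonempty; fixing any $L$ in this intersection completes the proof. The underlying Bézout-for-curves bound is standard (and can itself be obtained by one more generic linear projection reducing to Bézout in $\C^2$), so I would simply cite it.
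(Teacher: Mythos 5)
Your argument is correct, but it follows a genuinely different route from the paper's, at least for the degree bound. For the dimension claim you invoke Krull's Hauptidealsatz in the coordinate ring $\C[V]$; this is in substance the same fact the paper cites from Hartshorne (Exercise I.1.8, Prop.~I.7.1), just proved rather than referenced -- do note that the case $V\cap Z_\C(f)=\emptyset$ is vacuous, and that passing from ``height one'' to ``dimension $\dim V-1$'' uses that affine domains over $\C$ are catenary (your phrase ``not a unit on an open dense subset'' should just be ``a nonzero non-unit''). For the degree inequality the paper argues projectively: it reduces to $f$ irreducible, passes to the projective closures $\overline V$ and $\overline{Z_\C(f)}$ in $\PC^d$, applies Hartshorne's Theorem I.7.7 to bound $\sum_i\deg Y_i$ for the components $Y_i$ of $\overline V\cap\overline{Z_\C(f)}$, and then matches each affine component $W_i$ with its closure $Y_{j(i)}$. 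You instead stay affine and slice with a generic affine subspace $L$ of dimension $d-\dim V+1$, reducing everything to the classical B\'ezout bound for a curve against a hypersurface. Your reduction meshes nicely with the paper's slicing definition of degree and avoids projective closures in the main text, but it shifts the projective B\'ezout input into the cited curve--hypersurface bound and makes you stack several genericity conditions (purity and degree preservation of $V\cap L$, exactly $\deg W_i$ points in each $W_i\cap L$, pairwise disjointness, no component of $V\cap L$ inside $Z_\C(f)$); each is a standard dense open condition on the Grassmannian-type parameter space, and you correctly note that finitely many of them can be imposed simultaneously, so the only nontrivial unproved ingredients (degree behavior under generic linear sections and B\'ezout for curves) are of the same ``cite the classics'' caliber as the paper's appeal to Hartshorne. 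The paper's proof is shorter given that black box; yours is more self-contained in spirit and needs no reduction to irreducible $f$.
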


\begin{proof} We may assume that $f$ is irreducible
(if not, we decompose it into irreducible factors, use the lemma 
for each factor separately, and add up the degrees).

The first part about dimension of every irreducible component is exactly \cite[Exercise~I.1.8]{hartshorne} (also see \cite[Prop.~I.7.1]{hartshorne}).

As for the statement with degrees,
we let $\overline V\subseteq \PC^d$ be the projective closure
of $V$, and similarly for $\overline{Z_\C(f)}$. Let
$Y_1,\ldots,Y_m$ be the irreducible components of $\overline V\cap
\overline{Z_\C(f)}$. By \cite[Thm.~I.7.7]{hartshorne}, we have
$\sum_{i=1}^m\deg Y_i\le\deg(\overline V)\deg(\overline{Z_\C(f)})=
\deg(V)\deg(f)$. For every $W_i$, the projective closure $\overline W_i$
is irreducible, and so it equals a unique $Y_{j(i)}$,
and $\deg W_i\le \deg Y_{j(i)}$. The lemma follows.
Also see \cite[Thm.~1]{heintz-defi} for a similar statement.
\end{proof}

We will need to apply the lemma to a variety that is not necessarily
irreducible. We will use that the degree is additive in the following sense:
if $V_1,\ldots,V_k$ are the irreducible components of
a variety $V$, with $\dim V_i=\dim V$ for all $i$,
then $\deg V=\sum_{i=1}^k\deg V_i$.

We also need the property that a variety of degree $\Delta$ can be defined
by polynomials of degree at most~$\Delta$.

\begin{theorem}[Prop.~3 in \cite{heintz-defi}]\label{t:heintz-def}
 Let $V$ be an irreducible affine variety in $\C^d$. Then there exist $d+1$ 
polynomials $f_1,\ldots,f_{d+1} \in \C[x_1,\ldots,x_d]$
 of degree at most $\deg V$ such that $V=Z_\C(f_1,\ldots,f_{d+1})$.
\end{theorem}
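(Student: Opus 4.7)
The plan combines a finite-dimensional linear-algebra fact with a standard dimension-reduction argument. Set $k := \dim V$, $\Delta := \deg V$, and let $\mathcal{L}$ be the vector space of polynomials in $\C[x_1,\ldots,x_d]$ of degree at most $\Delta$ that vanish identically on $V$. The first and principal step is to establish the degree-bounded separation property
\[
Z_\C(\mathcal{L}) = V,
\]
i.e., every point outside $V$ is separated from $V$ by a polynomial of degree at most $\Delta$. Granted this, a mechanical reduction trims a basis of $\mathcal{L}$ down to $d+1$ generators.

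For the separation, I would use generic linear projections $\pi: \C^d \to \C^{k+1}$. A classical result in algebraic geometry states that for $\pi$ in sufficiently general position, the restriction $\pi|_V$ is birational onto its image and $\overline{\pi(V)}$ is an irreducible hypersurface of $\C^{k+1}$ of degree exactly $\Delta$, cut out by a single polynomial $F_\pi$. Then $f_\pi := F_\pi \circ \pi$ has degree at most $\Delta$ and vanishes on $V$, so $f_\pi \in \mathcal{L}$. Given $p \notin V$, I claim that for generic $\pi$ one has $\pi(p) \notin \overline{\pi(V)}$, and hence $f_\pi(p) \neq 0$. The condition $F_\pi(\pi(p)) = 0$ is a single polynomial equation in the parameters of $\pi$, so either it holds for no generic $\pi$ (yielding the desired separating $f_\pi$) or for every generic $\pi$; the latter is ruled out by a direct incidence count, exhibiting one projection whose $(d-k-1)$-dimensional kernel plane through $p$ misses the $k$-dimensional variety $V$, combined with a further generic perturbation that ensures $\pi(p)$ avoids the lower-dimensional boundary $\overline{\pi(V)} \setminus \pi(V)$.

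For the reduction to $d+1$ generators, pick any basis $g_1, \ldots, g_N$ of $\mathcal{L}$, giving $V = Z_\C(g_1, \ldots, g_N)$ with $\deg g_j \le \Delta$. Build $f_1, \ldots, f_{d+1}$ inductively as generic $\C$-linear combinations of the $g_j$, maintaining the invariant that $Z_\C(f_1,\ldots,f_i) = V \cup B_i$, where $B_i$ is a (possibly empty) subvariety whose irreducible components have dimension at most $d-i$ and are not contained in $V$. Since there are finitely many such components at each stage and none is killed by every $g_j$ (otherwise it would lie in $V$), a generic $f_{i+1} = \sum_j \lambda_{i+1,j}\, g_j$ vanishes on $V$ but not identically on any component of $B_i$, dropping each of their dimensions by one. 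After $d+1$ steps, $B_{d+1}=\emptyset$, so $V = Z_\C(f_1,\ldots,f_{d+1})$, and each $\deg f_i \le \Delta$ by construction.

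The hard part is the degree-preservation statement in the projection step: that for generic $\pi$, the closure $\overline{\pi(V)}$ is an irreducible hypersurface of degree exactly $\Delta$. This hinges on showing $\pi|_V$ is birational onto its image for generic $\pi$, which follows from the characterization of $\deg V$ via generic $(d-k)$-plane sections together with the fact that the fibers of a generic $\pi$ meet $V$ in at most one point over a dense open subset of $\overline{\pi(V)}$. The remaining ingredients—linear algebra inside $\mathcal{L}$ and the inductive dimension reduction—are routine once the degree bound on $F_\pi$ is in hand.
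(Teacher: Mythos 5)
The paper does not actually prove this statement---it is imported verbatim as Proposition~3 of Heintz \cite{heintz-defi}---so your proposal can only be measured against the classical argument, and it is essentially that argument: first show that $V$ is cut out set-theoretically by the polynomials of degree at most $\Delta=\deg V$ vanishing on it (via cones/generic linear projections onto $\C^{k+1}$), then thin these out to $d+1$ defining polynomials by taking successive generic linear combinations, using Krull's principal ideal theorem to drop the dimension of the extraneous components by one per step. Your second (reduction) step is correct and complete as sketched: a component $C$ of $B_i$ not contained in $V=Z_\C(\mathcal L)$ admits some $g_j$ not vanishing identically on it, so the bad combinations for $C$ form a proper linear subspace of coefficient space, and finitely many components leave room for a generic choice. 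Note also that you do not need $\deg \overline{\pi(V)}=\Delta$ (hence no birationality of $\pi|_V$); the inequality $\deg \overline{\pi(V)}\le\Delta$, which follows directly from the definition of degree by generic linear sections, already gives $\deg f_\pi\le\Delta$.

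The one place that needs repair is the genericity bookkeeping in the separation step. As written, the claim that ``$F_\pi(\pi(p))=0$ is a single polynomial equation in the parameters of $\pi$'' is not justified: the coefficients of $F_\pi$ are not a (single-valued, polynomial) function of $\pi$, so the either/or dichotomy over generic $\pi$ does not come for free; and the ``further generic perturbation'' to avoid $\overline{\pi(V)}\setminus\pi(V)$ is delicate precisely because that boundary set moves with $\pi$ (this is a genuine issue: affine images of affine varieties need not be closed). The standard fix is to projectivize: let $\overline V\subset\PC^d$ be the projective closure, so $\deg\overline V=\deg V$ and $\overline V\cap\C^d=V$, whence $p\notin\overline V$. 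Choose a $(d-k-2)$-dimensional center $E$ inside the hyperplane at infinity, generic so that $E$ is disjoint from $\overline V$ and from the image of $\overline V$ under projection from $p$ (both conditions are avoidance of subvarieties of dimension at most $k$ inside a $\PC^{d-1}$, hence generic). Projection from $E$ restricts to an affine linear map $\C^d\to\C^{k+1}$, the image of $\overline V$ is closed, irreducible, $k$-dimensional of degree at most $\Delta$, and $\pi(p)$ lies outside it because the span of $E$ and $p$ misses $\overline V$; pulling back a dehomogenized defining form of this hypersurface yields the separating polynomial of degree at most $\Delta$. With this substitution your argument is a correct proof of the cited proposition.
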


\heading{Ideals and Gr\"obner bases.}  
For polynomials $f_1,\ldots,f_m\in\C[x_1,\ldots,x_d]$,
the \emph{ideal} $I$ generated by $f_1,\ldots,f_m$ is the
set of all polynomials of the form $h_1f_1+\cdots+h_mf_m$,
$h_1,\ldots,h_m\in \C[x_1,\ldots,x_d]$. Every such ideal
has a \emph{Gr\"obner basis}, which is a set of polynomials 
that also generates $I$ and has certain favorable properties;
see, e.g., \cite{CLO} for an introduction. 

Each Gr\"obner basis is associated with a certain \emph{monomial ordering}.
We will use only Gr\"obner bases with respect to a
\emph{lexicographic ordering}, where monomials in the variables
$x_1,\ldots,x_d$ are first ordered according to the powers of $x_d$,
then those with the same power of $x_d$ are ordered according to
powers of $x_{d-1}$, etc.  In other words, we consider lexicographic ordering w.r.t. $x_d > x_{d-1} > \cdots > x_1$.




We will need the following theorem:
\begin{theorem}\label{t:groebner}
Assuming $d$ fixed and
given polynomials $f_1,\ldots,f_m\in\C[x_1,\ldots,x_d]$ with $\deg f_i \geq 1$,
a Gr\"obner basis of the ideal generated by the $f_i$
can be computed in time polynomial in $\sum_{i=1}^m\deg f_i$.
\end{theorem}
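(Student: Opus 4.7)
The plan is to combine two classical ingredients: a degree bound for reduced Gr\"obner bases that is polynomial when the ambient dimension $d$ is fixed, and a standard cost analysis of Buchberger's algorithm (or a more modern variant such as $F_4$).

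For the degree bound I would invoke the theorem of Dub\'e: for any ideal $I\subseteq\C[x_1,\ldots,x_d]$ generated by polynomials of degree at most $D$, the reduced Gr\"obner basis of $I$ with respect to any admissible monomial order (in particular the lexicographic order used in the paper) consists of polynomials of degree at most
\[
D_{\mathrm{GB}} := 2\left(\tfrac{D^2}{2}+D\right)^{2^{d-1}}.
\]
Since $d$ is fixed, $D_{\mathrm{GB}}$ is a polynomial in $D$, and because $D\le N:=\sum_{i=1}^m\deg f_i$, it is also polynomial in $N$. I would cite Dub\'e directly rather than reprove this bound, as its proof is quite long.

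Once this bound is in hand, a Gr\"obner basis can be computed in polynomial time by a degree-restricted variant of Buchberger's algorithm: maintain a set $G$ of polynomials (initially the $f_i$); repeatedly pick a critical pair $(g,g')\in G\times G$, compute the $S$-polynomial $S(g,g')$, reduce it modulo $G$, and add the remainder to $G$ if it is nonzero; stop when all $S$-polynomials reduce to zero. Every polynomial ever stored has degree at most $O(D_{\mathrm{GB}})$ (using the ``sugar'' selection strategy, or equivalently working in the homogenization, to prevent intermediate degree blow-up above the final basis degree), hence at most $\binom{D_{\mathrm{GB}}+d}{d}=O(D_{\mathrm{GB}}^d)$ coefficients, which is polynomial in $N$. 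The number of polynomials ever placed in $G$ is also bounded by the total number of distinct leading monomials of degree at most $O(D_{\mathrm{GB}})$, i.e.\ $O(D_{\mathrm{GB}}^d)$. Each $S$-polynomial computation and each top-reduction step takes polynomially many real arithmetic operations in the Real RAM model. Multiplying through, the total running time is polynomial in $N$, and applying the standard post-processing (interreduction) gives the reduced Gr\"obner basis in the same time bound.

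The only real subtlety is the control of intermediate degrees: a naive run of Buchberger's algorithm can produce $S$-polynomials or reduction remainders of degree much larger than $D_{\mathrm{GB}}$ before they eventually reduce to elements of small degree. I would treat this as the main technical point and handle it either by the sugar strategy of Giovini--Mora--Niesi--Robbiano--Traverso, or by invoking an off-the-shelf complexity statement for the $F_4$/$F_5$ algorithms or for the Kronecker-type algorithms of the Heintz school, all of which give a running time polynomial in $\binom{D_{\mathrm{GB}}+d}{d}$ once a degree bound on the output basis is available. In a published version I would simply cite one such source (e.g.\ Becker--Weispfenning, or the surveys of Bardet--Faug\`ere--Salvy) rather than giving a self-contained analysis, since for our application only the polynomial-in-$N$ bound, with $d$ a constant, is needed.
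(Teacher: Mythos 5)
Your plan differs from the paper's: the paper does not run a completion (Buchberger-type) procedure at all, but follows K\"uhnle and Mayr, assembling the reduced basis from the minimally reducible monomials $h$ together with their normal forms $\NF(h)$, where each $\NF(h)$ is computed by solving an explicit linear system whose size is controlled by Hermann's bound on the degrees of the $c_i$ in a representation $h-\NF(h)=\sum_i c_if_i$; Dub\'e's bound enters only to bound the degrees of the monomials to be enumerated and of the normal forms. The reason for that detour is exactly the step you flag as ``the only real subtlety'' and then do not close. Your assertion that every polynomial ever stored has degree $O(D_{\mathrm{GB}})$ is not a theorem: the sugar strategy is a pair-selection heuristic with no worst-case degree guarantee, and with the lexicographic order (which the application requires, and which is not degree-compatible) a single reduction step can raise the total degree, since the leading monomial of a reducer may have much smaller degree than the reducer itself (e.g.\ reducing $x_2x_1$ by $x_2+x_1^5$ yields $-x_1^6$). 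Likewise there is no off-the-shelf worst-case statement for $F_4$/$F_5$ of the form you need: the rigorous bounds in that literature are either conditional (regular or semi-regular input, degree-compatible orders) or are expressed in terms of the maximal degree actually reached during the computation, which is precisely the quantity to be bounded. The paper explicitly notes that for Buchberger's algorithm and its variations only much worse bounds seem to be known and that no explicit reference for the theorem could be located, so the plan to ``simply cite one such source'' does not go through.

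Your parenthetical alternative, working in the homogenization, does point to a repairable argument, but it is not ``equivalent'' to sugar and must be carried out: homogenize the $f_i$, choose a degree-compatible order on $\C[x_0,x_1,\ldots,x_d]$ whose dehomogenization is the lexicographic order, run Buchberger processing S-pairs by increasing degree, and use the theory of degree-truncated Gr\"obner bases to argue that once all S-pairs of degree at most the Dub\'e bound (now in $d+1$ variables) have been processed, the leading terms already generate the full leading-term ideal, because the reduced basis lives in those degrees; for homogeneous input the intermediate degrees are then genuinely bounded and the operation count is polynomial in $\binom{D_{\mathrm{GB}}+d+1}{d+1}$. One must also check that a Gr\"obner basis of the ideal generated by the homogenized generators (which is in general strictly smaller than the homogenization of the ideal) still dehomogenizes to a lex Gr\"obner basis of $\langle f_1,\ldots,f_m\rangle$. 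As written, your proof rests on an intermediate-degree bound that is asserted rather than proved; either work out this homogenization route in detail or fall back on the K\"uhnle--Mayr normal-form computation as the paper does.
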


We have not found an explicit reference in the literature that
would provide Theorem \ref{t:groebner}. 
In particular, for the usual Buchberger algorithm and variations of it,
only much worse bounds seem to be known.
However, Theorem \ref{t:groebner}
follows  by inspecting the method of K\"uhnle and Mayr
\cite{kuhnleMayr}  for finding a Gr\"obner basis in exponential space. (Also see \cite{MayrRitscher} for a newer algorithm.)

Before providing the details, we need one definition:
For any polynomial $h \in \C[x_1,\ldots,x_d]$, the \emph{normal form} $\NF(h)$ w.r.t. $I \subseteq \C[x_1,\ldots,x_d]$
is the unique \emph{irreducible}\footnote{A polynomial $h$ is \emph{reducible} w.r.t. $I$ if $\supp(h) \cap \langle\lm(I)\rangle \neq \emptyset,$ 
where the support of $h$ is a set of all monomials occurring in $h$ (i.e., having nonzero coefficient)
and $\langle\lm(I)\rangle = \langle\lm(f) \colon f \in I\rangle$ is an ideal of all \emph{leading monomials} of $I$,
where leading monomial $\lm(f)$ is the largest monomial occurring in $f$. } polynomial w.r.t. $I$ in the coset\footnote{$h+I= \{h+f\colon f \in I\}.$} $h + I$. 
Recall that we have fixed lexicographic ordering.\footnote{We note that the algorithm by \cite{kuhnleMayr} requires the monomial ordering given by rational weight matrix.
The weight matrix of lexicographic ordering consists just of zero's and one's, and hence it is rational. See \cite{kuhnleMayr} for details.}
We note that K\"uhnle and Mayr work over the field $\Q$, 
however, the theoretical background works also for $\C$. 
Let $I \subseteq \C[x_1,\ldots, x_d]$ be an ideal whose Gr\"obner basis we want to compute and assume it is 
generated by  $m$ polynomials of degree bounded by $D$.  

\begin{enumerate}
 \item[(i)] First important lemma \cite[Section 5]{kuhnleMayr},\cite[Lemma 3]{MayrRitscher} is 
that the \emph{reduced} Gr\"obner basis  is always equal to the set of all the polynomials $h - \NF(h)$, where $h$ is a monomial \emph{minimally reducible}\footnote{A monomial $h$ is \emph{minimally reducible}
w.r.t. $I$ if it is reducible w.r.t. $I$ but none of its proper divisors is reducible w.r.t. $I$.} w.r.t. $I$. 

\item[(ii)] Let $h \in \C[x_1,\ldots,x_d]$ be arbitrary but fixed. Our next goal is to compute $\NF(h)$ w.r.t. $I$. Since $h - \NF(h) \in I$, there is a representation
 \begin{equation}\label{eq:hermann}
  h-\NF(h) = \sum_{i=1}^mc_if_i \quad \text{with} \quad c_1,\ldots,c_m \in \C[x_1,\ldots,x_d]. 
 \end{equation} 
The next step is to rewrite the polynomial equation (\ref{eq:hermann}) to a system of linear equations.
Recall that $h$ and  $f_i$'s are fixed and $\NF(h)$ and  $c_i$'s are unknowns.
Let us assume that $\deg c_i \leq E$ for all $i$ and some $E$.
Expanding all the polynomials $h, f_i, c_i$ and also the polynomial $r:=\NF(h)$ to sums of monomials
 and comparing the coefficients of left and right side in (\ref{eq:hermann}), we get one linear equation for every term.
 If $\deg \NF(h) \leq N$ for some $N$, it can be shown that there are at most $(\max(N, D+E))^d$ equations in no more than 
 $N^d+mE^d$ unknowns. It follows that all these linear equations can be rewritten into a single matrix equation and the size of the matrix is bounded by $N^d + m(D+E)^d$. For more details we refer to \cite[Section 3]{kuhnleMayr}.
 Note  that it can happen that there are more unknowns than equations. 
 Fortunately, since we are interested in a solution with minimal $r$ (w.r.t. lexicographic ordering), we can always decrease the number of unknowns by 
 putting the coefficient corresponding to the largest monomial of $r$ to be zero. 
 For more details (and also example) we again refer to \cite[Section 3]{kuhnleMayr}.


 \item[(iii)] Now we want to bound degrees of $c_i$'s and also the degree of $\NF(h)$. By Hermann \cite{hermann, mayrMeyer}, the degrees of $c_i$'s are bounded by 
$E := \deg(h- \NF(h))+(mD)^{2^d}.$
Dub\'e \cite{dube}  showed the
existence of a Gr\"obner basis $G$ for $I$ where the degrees of all
polynomials in $G$ are bounded by $M:=2(D^2/2 + D)^{2^{d-1}}$.
Using this bound, K\"uhnle and Mayr \cite[Section 2]{kuhnleMayr} showed 
that the degree of the normal form of $h$ w.r.t. $I$ can be always bounded by $N:=((M+1)^d\deg(h))^{d+1}.$

\item[(iv)] It follows that to compute reduced Gr\"obner basis of $I$ it is enough to enumerate all monomials up
to Dub\'e's bound and calculate their normal forms and normal forms of all its direct divisors.
This can be done by solving the system of linear equations described in (ii).
\end{enumerate}

In order to turn the described method into an algorithm, we have to be able to efficiently solve a system of linear equations.
K\"uhnle and Mayr used Turing machines, that is why they need to work over $\Q$. Since we work with the Real RAM model of computation
which allows arithmetic operations with arbitrary real numbers (in unit time),
 we can use the described algorithm over~$\C$ as well.

Now we are ready to prove Theorem \ref{t:groebner}.

\begin{proof}[Proof of Theorem \ref{t:groebner}]

Clearly $D \leq \sum_{i=1}^m\deg f_i$ and $m \leq \sum_{i=1}^m\deg f_i$, since $\deg f_i \geq 1$ for every $i$.
It follows from (i)--(iv) that, for $d$ fixed, the Gr\"obner basis can be computed in polynomial time in $\sum_{i=1}^m\deg f_i$.
Indeed, by (ii) and (iii), the normal form of a polynomial of degree bounded by $O(D)$ can be computed in time polynomial in $D$, and hence also in 
$\sum_{i=1}^m\deg f_i$.
According to (iv), the step (ii) is repeated polynomially many times; the claim follows.
\end{proof}


\heading{Tools from real algebraic geometry. }
Let $\FF\subset\R[x_1,\ldots,x_d]$ be a finite set of polynomials. The
\emph{arrangement} of (the zero sets of) $\FF$  
is the partition
of $\R^d$ into maximal relatively open connected subsets,
called \emph{cells}, such that for each cell $C$
there is a subset $\FF_C\subseteq \FF$ such that
$C\subseteq Z(f)$ for all $f\in\FF_C$ and 
$C\cap Z(f)=\emptyset$ for all $f\in\FF\setminus\FF_C$.

Similar to \cite{AMSII},
a crucial tool for us is the following theorem of Barone and Basu.

\begin{theorem}[{Barone and Basu~\cite{BB}}]
\label{t:basu}
Let $V$ be a $k$-dimensional algebraic variety in $\R^d$
defined by a finite set $\FF$  of $d$-variate real polynomials, 
each of degree at most 
$D$, and let $\GG$ be a set of $s$ polynomials of degree at most
$E\ge D$.  Then  the number of those cells
of the arrangement of the zero sets of $\FF\cup\GG$ 
that are contained in $V$
is bounded by $O(1)^d D^{d-k}(sE)^k$.
\end{theorem}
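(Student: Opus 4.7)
The plan is to reduce the problem to counting connected components of realizable sign conditions of $\GG$ on $V$ itself, and then bound these by a Morse-theoretic argument adapted to varieties, in the style of the Milnor--Thom--Warren bounds generalized from $\R^d$ to a $k$-dimensional ambient variety.

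First, I would observe that any cell $C$ of the arrangement of $\FF\cup\GG$ that is contained in $V$ must lie in $Z(f)$ for every $f\in\FF$, so the $\FF$-part of its sign pattern is trivial; meanwhile, each $g\in\GG$ is identically $>0$, $<0$, or $=0$ on $C$. Hence every such cell is a connected component of $V\cap\sigma$ for some realizable sign condition $\sigma$ of $\GG$, and the total count equals $\sum_\sigma b_0(V\cap\sigma)$, where $b_0$ denotes the number of connected components.

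Second, I would replace $V$ by a smooth bounded approximation. By Theorem~\ref{t:heintz-def}, after decomposing $V$ into irreducible components and handling each of maximal dimension $k$ separately, I may assume $V$ is defined by $O(1)$ polynomials of degree $\le D$; a generic linear combination of these trimmed to $d-k$ equations gives a complete intersection of dimension $k$ containing the component. Replacing each defining polynomial $f_i$ by $f_i-\eps_i$ with generic infinitesimal $\eps_i>0$ yields a smooth $k$-dimensional variety $V_\eps$, and a further perturbation of $\GG$ (replacing each $g_j$ by $g_j\pm\delta$) makes each sign condition open and every $V_\eps\cap\overline\sigma$ a compact smooth manifold with corners. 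Standard Hardt-triviality / limit arguments then give $\sum_\sigma b_0(V\cap\sigma)\le\sum_{\tilde\sigma}b_0(V_\eps\cap\tilde\sigma)$.

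Third, I would count the components of each $V_\eps\cap\overline{\tilde\sigma}$ via critical points of a generic linear function $\ell$: every component contains at least one critical point (a local $\ell$-minimum) on the lowest-codimension boundary stratum meeting it. Such critical points are the common zeros of a polynomial system consisting of the $d-k$ equations defining $V_\eps$ (degree $\le D$), the equations expressing that some subset of at most $k$ of the $g_j$'s vanish (degree $\le E$), together with Lagrange-multiplier / vanishing-minor conditions expressing criticality of $\ell$, whose degrees are bounded in terms of $D$ and $E$. Summing the resulting B\'ezout estimate over the $\binom{s}{\le k}\le (sE)^k/E^k$ choices of the vanishing subset, over the $2^{s-k}$ sign patterns on the remaining $g_j$'s, and folding everything into constants depending on $d$, yields the claimed bound $O(1)^d D^{d-k}(sE)^k$, the factor $D^{d-k}$ coming from the defining equations of $V$ and $(sE)^k$ from the sign-condition polynomials.

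The hard part will be the smoothing step when $V$ is not a smooth complete intersection: a general $k$-dimensional real variety may require many defining polynomials and can have a rich singular stratification, so the infinitesimal perturbation must be chosen to preserve the relevant topology without creating spurious components, and one must verify that components of $V\cap\sigma$ are indeed captured by components of $V_\eps\cap\tilde\sigma$ in the limit. Ensuring that irreducible components of $V$ of dimension lower than $k$ do not inflate the count, and that the ``generic'' choices of linear function and infinitesimals are genuinely generic on every stratum simultaneously, is the technical heart of the argument and essentially the content of \cite{BB}.
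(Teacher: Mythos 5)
First, a point of orientation: the paper does not prove this statement at all---Theorem~\ref{t:basu} is imported verbatim from Barone and Basu~\cite{BB} and used as a black box---so there is no internal proof to compare yours against; what you have written is an outline of the external result itself. Your overall strategy (reduce to connected components of realizable sign conditions of $\GG$ on $V$, perturb to a smooth bounded situation, count critical points of a generic function, and finish with a B\'ezout-type estimate) is indeed the right family of ideas, the same one underlying \cite{BB} and its predecessors in Basu--Pollack--Roy.

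However, as written the sketch mishandles exactly the steps that make the theorem nontrivial. (1)~Your appeal to Theorem~\ref{t:heintz-def} is not valid here: that theorem bounds the number and degrees of defining equations of an irreducible \emph{complex} variety by its \emph{degree}, and an irreducible component of $Z(\FF)$ can have degree of order $D^{d-k}$ (cf.\ Lemma~\ref{l:gbezout}), not $D$; so you cannot conclude that the components are cut out by $O(1)$ polynomials of degree at most $D$. Likewise, generic linear combinations of the $f\in\FF$ trimmed to $d-k$ equations cut out a set that merely contains the component and may carry extra, possibly higher-dimensional or singular, pieces whose critical points you would also be counting. (2)~The specific perturbation $f_i\mapsto f_i-\eps_i$ does not in general capture the topology of $V$: already for $\FF=\{x,-x\}$ in $\R^1$ the perturbed system is empty, and in general $Z(f_1-\eps_1,\ldots)$ need not lie near $V$, so the claimed inequality $\sum_\sigma b_0(V\cap\sigma)\le\sum_{\tilde\sigma}b_0(V_\eps\cap\tilde\sigma)$ fails for this deformation. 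The standard safe deformation uses $\sum_i f_i^2-\eps$, but that is a hypersurface of degree $2D$ and destroys precisely the separation of degrees that the theorem asserts; constructing a deformation that is simultaneously a smooth complete intersection, topology-capturing, and respects the pattern ``$d-k$ equations of degree $\approx D$, $k$ of degree $\approx E$'' is the technical core of \cite{BB}. (3)~Finally, the B\'ezout bookkeeping is not routine: the criticality conditions (minors of a Jacobian with rows of degrees $D-1$ and $E-1$, or a Lagrange system in extra variables) do not obviously multiply out to $O(1)^dD^{d-k}(sE)^k$; a naive count gives bounds of the shape $(O(D+sE))^d$, and obtaining the refined product form is the main contribution of Barone and Basu. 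You acknowledge this by deferring the hard part to \cite{BB}, which is honest, but it means the proposal is an outline of the known strategy rather than a proof.
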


We will be using the theorem only for $d$ a constant and
$\GG=\{g\}$ consisting of a single
polynomial to get an upper bound of $O(D^{d-k} E^k)$
on the number of connected components of $V\setminus Z(g)$.

For the range searching algorithm, we also need the following
algorithmic result on the construction of arrangements.

\begin{theorem}[{Basu, Pollack and Roy~\cite[Thm.~16.18]{BasuPollackRoy-book}}]
\label{t:make-arrg}
Let $\FF=\{f_1,\ldots,f_m\}$ be a set of $m$ real
$d$-variate polynomials, each of degree
at most~$D$. Then the arrangement of the zero sets of $\FF$ in $\R^d$
has at most $(mD)^{O(d)}$ cells, and it can be computed
in time at most $T= m^{d+1}D^{O(d^4)}$.
Each cell is described as a semialgebraic set
using at most $T$ polynomials of degree bounded by
$D^{O(d^3)}$. Moreover, the algorithm supplies adjacency information
for the cells, indicating which cells are contained in the boundary
 of each cell,
and it also supplies an explicitly given point in each cell.
\end{theorem}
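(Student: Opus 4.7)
The plan is to follow the \emph{critical point method} combined with a cylindrical-algebraic-decomposition-style recursion on dimension, which is the standard Basu--Pollack--Roy framework. The starting observation is that every connected component of a compact smooth real algebraic variety contains a critical point of a generic linear projection $\pi\colon\R^d\to\R$. For the general (possibly singular, unbounded) sign conditions on $\FF$ at hand, one introduces formal infinitesimals $\eps_1,\eps_2,\ldots$ and replaces $\FF$ by controlled perturbations, also intersecting with a large ball of radius $1/\eps$, so that each realizable sign condition becomes bounded and generically smooth while preserving the connectivity structure. The critical-point equations then cut out a zero-dimensional subset of $\R^d$, which is represented algorithmically via a \emph{rational univariate representation}: a single univariate polynomial whose roots, together with explicit rational parametrizations of the coordinates, encode all the sample points.

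The core of the algorithm is then a recursion on $d$. At dimension $d$, one forms the polynomial system consisting of $\FF$ together with the critical-point equations for $\pi$, projects into $\R^{d-1}$ using resultants or subresultants, and recursively obtains sample points in each cell of the projected arrangement. For each $(d-1)$-dimensional sample point $\bb$, one lifts by adjoining the real roots of the univariate polynomials obtained by specializing $\FF$ along the fiber $\pi^{-1}(\bb)$. The union of all lifted points yields at least one representative in each sign-invariant cell of the arrangement of $\FF$. Degrees essentially square at each level, so after $d$ levels of recursion one reaches the stated bound $D^{O(d^3)}$ on the defining polynomials, and the combinatorial count $(mD)^{O(d)}$ on cells follows from Bezout-type estimates for the zero-dimensional critical sets.

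Each cell is then described as a semialgebraic set via \emph{Thom's lemma}: a cell is uniquely determined by the sign pattern of the original $f_i$ together with the signs of the successive derivatives that appear during the projection phase, giving a list of at most $T$ polynomials of degree bounded by $D^{O(d^3)}$. Adjacency information comes for free from the recursive structure: a lower-dimensional cell lies in the closure of a higher-dimensional one iff their sample points bear the expected incidence relation in the cylindrical structure, which the algorithm maintains as it lifts from dimension $d-1$ to dimension~$d$.

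The main obstacle is controlling the degree blow-up through the $d$ levels of recursion and handling non-generic configurations cleanly; this is what makes the infinitesimal deformations essential, since they force transversality and boundedness without disturbing the underlying combinatorial structure. A secondary technical difficulty is performing all sign determinations and real root isolations symbolically over the successive extensions $\Q(\eps_1,\ldots,\eps_k)$ and specializing limits back to $\R$ at the end, which is ultimately what drives the exponent $d^4$ in the time bound $m^{d+1}D^{O(d^4)}$.
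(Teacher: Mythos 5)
The paper does not prove this statement at all: it is imported verbatim as Theorem~16.18 of Basu, Pollack and Roy \cite{BasuPollackRoy-book}, so the only meaningful comparison is with their proof, and against that your sketch has a genuine gap. The bounds in the statement are singly exponential in $d$ --- at most $(mD)^{O(d)}$ cells, time $m^{d+1}D^{O(d^4)}$, cell descriptions of degree $D^{O(d^3)}$ --- whereas the algorithm you describe (projection by resultants/subresultants, degrees squaring at each of the $d$ levels, cells described via Thom's lemma by signs of the iterated derivatives, adjacency read off the cylindrical structure) is Collins-style cylindrical algebraic decomposition. Your own accounting exposes the problem: if degrees square at each of $d$ projection levels, after $d$ levels they are of order $D^{2^{d}}$, and the number of cells produced is $(mD)^{2^{O(d)}}$; this is doubly exponential and cannot ``reach the stated bound'' $D^{O(d^3)}$, nor the stated time bound. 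The singly-exponential result of Basu--Pollack--Roy is obtained by a different mechanism: the critical point method with infinitesimal deformations and rational univariate representations (which you correctly invoke) only yields sample points meeting every cell; to decide which sample points lie in the \emph{same} semialgebraically connected component of a sign condition --- which is exactly what ``computing the arrangement'' requires --- their proof uses the roadmap algorithm with parametrized paths and a connecting subroutine, and the cell descriptions come from singly-exponential quantifier elimination rather than from Thom's lemma applied to a tower of projections. Connectivity and adjacency are the genuinely hard part of this theorem and do not come for free from any cylindrical recursion; without the roadmap machinery your outline proves only a sampling statement (in the spirit of the existential theory of the reals), not the theorem as stated.
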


\section{A key lemma: Partitioning Polynomial that does not vanish
on a variety}\label{s:key}

In this section we establish the following lemma, which will
allow us to deal with the exceptional sets and iterate
the construction of a partitioning polynomial. Although
we are dealing with a problem in $\R^d$, it will be more convenient
to work with complex varieties.
This is because algebraic varieties over an algebraically closed field
have some nice properties that fail for real varieties in general.

\begin{lemma}[Key lemma]\label{l:key}
Let $V\subseteq \C^d$ be a
complex algebraic variety of dimension $k \geq 1$, 
such that all of its irreducible components $V_j$ have
 dimension $k$ as well. Let $Q\subset V\cap\R^d$ be a
finite point set, and let $r>1$ be a parameter.  Then there
exists a real $\frac1r$-partitioning polynomial $g$ for $Q$
of degree at most $D=O(r^{1/k})$ that does not vanish identically on any of the irreducible components $V_j$ of $V$.
\end{lemma}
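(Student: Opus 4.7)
The plan is to reduce the problem to ordinary Guth--Katz polynomial partitioning in $\R^k$ via a generic linear projection $\pi\colon\R^d\to\R^k$. Noether normalization applied to the equidimensional variety $V$ of dimension $k$ tells us that, for a generic real $\pi$, the complex extension $\pi\colon\C^d\to\C^k$ restricts to a finite surjective morphism $V\to\C^k$; finiteness is inherited by each irreducible component $V_j$ and, because $V_j$ is irreducible of dimension $k$, the map $V_j\to\C^k$ is still surjective. That surjectivity is exactly what will force a nonzero polynomial in $y_1,\ldots,y_k$ not to vanish on any $V_j$ after pullback, and importantly this will be achieved without ever having to decompose $V$ into its irreducible components.

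Concretely, after fixing such a $\pi$ (its existence following from the fact that the admissible $\pi$ form a nonempty Zariski-open subset of the space $\C^{dk}$ of linear maps $\C^d\to\C^k$, together with the Zariski-density of $\R^{dk}$ in $\C^{dk}$), I would set $Q':=\pi(Q)\subset\R^k$ and assign each $y\in Q'$ the weight $w(y):=|Q\cap\pi^{-1}(y)|$, so the total weight equals $|Q|$ and each weight is bounded by a constant $F\le\deg V$ coming from the finiteness of $\pi|_V$. Applying the weighted Guth--Katz polynomial partitioning theorem in $\R^k$ to $(Q',w)$ with parameter $r$ then produces a nonzero polynomial $h\in\R[y_1,\ldots,y_k]$ of degree $O(r^{1/k})$ such that every connected component of $\R^k\setminus Z(h)$ carries $Q'$-weight at most $|Q|/r$ (equivalently, one can duplicate each point $w(y)$ times and apply the unweighted version with parameter $Fr$).

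Setting $g(x):=h(\pi(x))$ gives a real polynomial of degree $O(r^{1/k})$ with $Z(g)=\pi^{-1}(Z(h))$, so every connected component of $\R^d\setminus Z(g)$ lies in $\pi^{-1}(C)$ for some connected component $C$ of $\R^k\setminus Z(h)$, and therefore meets $Q$ in at most $\sum_{y\in Q'\cap C}w(y)\le|Q|/r$ points, as required. If $g$ vanished identically on some $V_j$, then $h$ would vanish on the image $\pi(V_j)=\C^k$, forcing $h\equiv0$ and contradicting the choice of $h$. The main obstacle is the very first step: securing a single linear projection that simultaneously realizes Noether normalization on $V$ with a uniform fiber bound; this is handled cleanly by combining classical Noether normalization with the genericity of linear projections and the Zariski-density of $\R^{dk}$ in $\C^{dk}$. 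The implicit constant in $O(r^{1/k})$ inevitably depends on $\deg V$, but this dependence is acceptable for the inductive application of the lemma in the proof of Theorem~\ref{t:multilev}.
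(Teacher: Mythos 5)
Your proposal is correct and is essentially the paper's own proof: a real linear projection $\pi\:\C^d\to\C^k$ mapping every irreducible component onto $\C^k$, Guth--Katz partitioning of the projected point (multi)set in $\R^k$, and pullback of the resulting polynomial, with the same arguments for the partitioning property (preimages of components) and for non-vanishing on each $V_j$; the only difference is that you obtain $\pi$ from Noether normalization and finiteness of generic linear projections, whereas the paper builds it by iterating the projection (elimination) theorem with generic coordinate changes (Corollary~\ref{c:thatproj}), a formulation chosen mainly so that the projection can later be computed via Gr\"obner bases, and finiteness of fibers is never actually needed---surjectivity of $\pi$ on each component suffices. Your closing caveat is unnecessary: the weighted/multiset Guth--Katz theorem already gives degree $O(r^{1/k})$ with the constant depending only on $k$, so no dependence on $\deg V$ arises and the lemma is obtained exactly as stated.
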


 Note that the bound on $\deg g$ in the key lemma cannot be improved to $O\left(\left(\frac{r}{\Delta}\right)^{1/k}\right)$, where $\Delta$ is the degree of $V$, unless there are some restrictive conditions on $r$.
We thank to the anonymous referee, who pointed it out. The example is as follows:
let us assume that all points of $Q$ lie on a $k$-flat $F$. Since $F$ is isomorphic to $\R^k$, 
partitioning of $Q$ corresponds to a partititioning in $\R^k$.
It is clear that if $V$ is formed by a union of $F$ and  many other $k$-flats parallel to $F$, 
then $\left(\frac{r}{\Delta}\right)^{1/k}$ can be made arbitrarily close to zero and hence $O\left(\left(\frac{r}{\Delta}\right)^{1/k}\right)$ cannot serve as a degree bound for a partitioning polynomial. 

However, we believe that, for an irreducible variety, one can hope for a better bound and we propose the following conjecture:

\begin{conj}
 Let $V\subseteq \C^d$ be an irreducible
complex algebraic variety of dimension $k \geq 1$ and degree $\Delta$. Let $Q\subset V\cap\R^d$ be a
finite point set, and let $r \geq \Delta^{k+1}, r>1$ be a parameter.  Then there
exists a real $\frac1r$-partitioning polynomial $g$ for $Q$
of degree at most $D=O\left(\left(\frac{r}{\Delta}\right)^{1/k}\right)$ that does not vanish identically on $V$.
\end{conj}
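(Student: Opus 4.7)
The plan is to imitate the iterated polynomial ham sandwich argument behind Guth--Katz's partitioning theorem, but carried out inside the space of real polynomials modulo those vanishing on $V$. Let $J := I_\C(V) \cap \R[x_1,\ldots,x_d]$ and $W_D := \R[x_1,\ldots,x_d]_{\le D}/J_{\le D}$. Since $V$ is irreducible over $\C$, the ideal $J$ is prime: when $V$ is conjugation-invariant $I_\C(V) = J\otimes\C$, and otherwise $J = I_\C(V\cup\overline V)\cap\R[x]$ with $V\cup\overline V$ of degree $2\Delta$; in either case a product of polynomials whose classes in the respective $W_{D_i}$ are nonzero remains nonvanishing on $V$. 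Picking each $g_i$ as a lift of a nonzero class in some $W_{D_i}$ therefore takes care of the ``does not vanish on $V$'' half of the conclusion for free.

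The construction is then the standard iterated polynomial ham sandwich. Set $L := \lceil\log_2 r\rceil$. At step $i = 1,\ldots,L$, having partitioned $Q$ into $2^{i-1}$ pieces by sign patterns of the polynomials built so far, I would choose the smallest $D_i$ with $\dim_\R W_{D_i} \ge 2^{i-1}+1$ and apply Stone--Tukey polynomial ham sandwich inside $W_{D_i}$ (viewed as real-valued functions on $V\cap\R^d$ via evaluation) to obtain a $g_i$ bisecting all $2^{i-1}$ current pieces. The final polynomial $g := g_1g_2\cdots g_L$ is nonvanishing on $V$ by the prime-ideal remark above, and each component of $\R^d\setminus Z(g)$ has constant sign pattern $(\sgn g_1,\ldots,\sgn g_L)$ and so contains at most $|Q|/2^L\le|Q|/r$ points of $Q$. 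The degree of $g$ is $\sum_i D_i$, and controlling this sum is where the quantitative dependence on $\Delta$ enters.

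Two Hilbert-function lower bounds on $\dim_\R W_D$ will drive the analysis. A ``uniform'' bound $\dim_\R W_D \ge c_1 D^k$ valid for every $D$ follows from a generic real projection $V\to\R^k$ (pull back $\R[y_1,\ldots,y_k]_{\le D}$), and is essentially what the Key Lemma uses to get $O(r^{1/k})$. The new ingredient needed for the conjecture is the sharper
\[
\dim_\R W_D \;\ge\; c_2\,\Delta\,D^k/k! \qquad \text{for all } D\ge D^*,
\]
coming from the classical fact that the Hilbert polynomial of the projective closure of $V$ has leading term $\Delta D^k/k!$ and agrees with the Hilbert function above the Castelnuovo--Mumford regularity $D^*$ of $V$. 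With both bounds in play, for $D_i\ge D^*$ one gets $D_i = \Theta((2^i/\Delta)^{1/k})$ and a geometric tail sum dominated by $D_L = \Theta((r/\Delta)^{1/k})$, while for $D_i<D^*$ the uniform bound gives $D_i = O(2^{i/k})$ and a geometric head sum $O(D^*)$. The hypothesis $r\ge\Delta^{k+1}$ is exactly what ensures $D_L\gtrsim\Delta\ge D^*$, so the head is absorbed into the target bound. The main obstacle I foresee is pinning down $D^*$ with an explicit linear-in-$\Delta$ estimate: for singular, non-Cohen--Macaulay irreducible varieties the Castelnuovo--Mumford regularity can be much larger than $\Delta$ (only doubly-exponential bounds hold in general), so a fully rigorous proof likely requires a substitute ``Noether-normalization-style'' version of the sharper bound that holds already for $D^* \le C\Delta$ with an absolute constant $C$.
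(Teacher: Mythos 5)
First, note that the statement you were asked about is stated in the paper as a \emph{conjecture}: the authors give no proof, remarking only that the cases $k=d$ and $k=d-1$ follow from Guth--Katz and from Kaplan et al./Zahl, and that Basu and Sombra pose a similar conjecture. So there is no ``paper proof'' to match, and your argument must stand on its own. As written it does not: the step you yourself flag as the main obstacle is a genuine gap, not a technicality. Your scheme (iterated polynomial ham sandwich inside $W_D=\R[x_1,\ldots,x_d]_{\le D}/J_{\le D}$, with primality of $J$ giving the non-vanishing of the product $g=g_1\cdots g_L$ on $V$) is sound, and your ``uniform'' bound $\dim_\R W_D\gtrsim D^k$ is exactly what the paper's Key Lemma extracts from a projection onto $\C^k$ — it only yields $O(r^{1/k})$. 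Everything that would improve this to $O((r/\Delta)^{1/k})$ hinges on the sharper bound $\dim_\R W_D\gtrsim\Delta D^k$ holding already for $D\ge D^*$ with $D^*=O(\Delta)$, and the route you propose for it — agreement of the Hilbert function with the Hilbert polynomial above the Castelnuovo--Mumford regularity — cannot deliver this: for general prime ideals the regularity is \emph{not} linearly (nor polynomially, in any naive sense) bounded by the degree (the Eisenbud--Goto conjecture is false; only far worse general bounds are known), so the argument is incomplete precisely at the one step that separates the conjecture from the Key Lemma. There is also a secondary bookkeeping gap: even granting the sharper bound for $D\ge D^*$, your ``geometric head sum $O(D^*)$'' does not cover the transition steps with $2^{i/k}\gtrsim D^*$ but $2^i\lesssim \Delta (D^*)^k$; there are about $\log_2\Delta$ such steps, each of which may force $D_i$ close to $D^*$, so the head can be of order $D^*\log\Delta$, which already spoils the bound at the borderline $r=\Delta^{k+1}$.

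The ``Noether-normalization-style substitute'' you ask for does seem to exist, and it would repair both defects at once: instead of regularity, use a generic linear projection of the irreducible $V$ onto $\C^{k+1}$, which (in characteristic $0$) is birational onto a hypersurface $Z_\C(\tilde f)$ of degree $\Delta$; pulling back degree-$\le D$ polynomials and using that a real-linearly independent family of real polynomials stays independent over $\C$ gives, for \emph{all} $D$,
\[
\dim_\R W_D \;\ge\; \dim_\C \C[x]_{\le D}/I_\C(V)_{\le D}\;\ge\; \binom{D+k+1}{k+1}-\binom{D-\Delta+k+1}{k+1},
\]
which behaves like $D^{k+1}$ for $D\le\Delta$ and like $\Delta D^k$ for $D\ge 2\Delta$. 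This supplies the sharper bound with threshold $O(\Delta)$, confines the transition to $O_k(1)$ steps (killing the $\log\Delta$ loss), and under $r\ge\Delta^{k+1}$ makes the head $O_k(\Delta)=O_k((r/\Delta)^{1/k})$. If you carry out this projection argument carefully (birationality of the generic projection, degree preservation for the affine closure, and the real/complex comparison), you would have a plausible proof of the conjecture along your lines — but none of this is in your write-up, and the paper itself leaves the statement open, so what you have is an outline with its central quantitative ingredient missing rather than a proof.
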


Note that for $k=d$ the affirmative answer follows from the partitioning theorem by Guth and Katz \cite{GK2}, and for $k=d-1$ from the 
theorem by Kaplan et al. \cite{KMSS} (for $d=3$) and also by Zahl \cite{zahl3d}.
We also note that Basu and Sombra propose similar conjecture, see \cite[Conj. 3.4]{basu-sombra}.

Even if the conjecture is true, we cannot use it for our range searching application unless we know how to effectively decompose a variety into irreducibles.



Before proving the key lemma, we first sketch the idea. The proof is based on a projection trick.
Let us consider the \emph{standard projection}
$\pi_d\:\C^d\to \C^{d-1}$ given by 
$(a_1,\ldots,a_d)\mapsto (a_1,\ldots,a_{d-1})$,
i.e., forgetting the last coordinate.
The standard projection of an affine variety need not be a variety in general
(consider, e.g., the projection of the hyperbola $Z(xy-1)$
on the $x$-axis). However, for every variety of dimension at most $d-1$,
there is a simple linear change of coordinates in $\C^d$  (Lemma \ref{l:coordinate})
after which the image of $V$ under the standard projection
is a variety in $\C^{d-1}$ (Theorem \ref{t:proj}). Moreover, this projection preserves
the dimension of the variety  (Theorem \ref{t:proj}).

The idea of the proof of the key lemma is to project
the given $k$-dimensional complex variety $V$ 
onto $\C^k$, by iterating the standard projection,
and, if necessary, coordinate changes in such a way that 
the image of $V$ is all of $\C^k$  (Corollary \ref{c:thatproj}).
Then we find a $\frac1r$-partitioning polynomial for the projection
of the given point set $Q$
by the Guth--Katz method, and we pull it back to a $\frac1r$-partitioning 
polynomial in $\R^d$.

We now present this approach in more detail.
We begin with a well-known sufficient condition guaranteeing
that  the standard projection of a variety is a variety
of the same dimension.

\begin{theorem}[{Projection theorem}]\label{t:proj}
 Let $I \subset \C[x_1,\ldots,x_d]$ be an ideal, $d \geq 2,$ and let 
$J := I \cap \C[x_1,\ldots,x_{d-1}]$ be the ideal 
consisting of all polynomials in $I$ that do not contain the
variable $x_d$.
 Suppose that $I$ contains a nonconstant polynomial $f$,
with $D=\deg f\ge 1$,
in which the monomial $x_d^D$ appears with a nonzero coefficient. Let $V=V(I)$ be a complex variety defined as the zero locus of 
all polynomials in $I$.
Then the image $\pi_d(V)$ under the standard projection
$\pi_d\:\C^d\to \C^{d-1}$ 
is the variety $Z_\C(J) \subseteq \C^{d-1}$,
and $\dim \pi_d(V)=\dim V$.
 \end{theorem}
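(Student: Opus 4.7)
The plan is to establish the two set-theoretic inclusions $\pi_d(V)\subseteq Z_\C(J)$ and $Z_\C(J)\subseteq\pi_d(V)$ separately, and then to derive the dimension equality via an integrality argument between the coordinate rings of $V$ and $Z_\C(J)$. As a preliminary normalization, I would divide $f$ by the nonzero constant $c$ equal to its coefficient of $x_d^D$; the result is still in $I$, and since $\deg f=D$ forces every monomial of $f$ of $x_d$-degree exactly $D$ to involve no other variable, after this normalization $f$ is monic of degree $D$ in $x_d$ when viewed as an element of $\C[x_1,\ldots,x_{d-1}][x_d]$. This monicity is the crucial property powering all subsequent steps.

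The easy inclusion $\pi_d(V)\subseteq Z_\C(J)$ is immediate: for any point $(a_1,\ldots,a_d)\in V$ and any $g\in J$, the polynomial $g$ does not involve $x_d$, so $g(a_1,\ldots,a_{d-1})=g(a_1,\ldots,a_d)=0$, placing $\pi_d(a_1,\ldots,a_d)$ in $Z_\C(J)$.

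The reverse inclusion $Z_\C(J)\subseteq\pi_d(V)$ is the main obstacle, and I would handle it by invoking the classical Extension Theorem from elimination theory (see, e.g., \cite{CLO}). That theorem states: given generators $f_1,\ldots,f_s$ of $I$ and a point $a=(a_1,\ldots,a_{d-1})\in Z_\C(J)$, there exists $a_d\in\C$ with $(a_1,\ldots,a_d)\in V$ provided $a$ is not a common zero of the $x_d$-leading coefficients of $f_1,\ldots,f_s$. Adjoining the normalized $f$ to any generating set of $I$ yields a generator whose $x_d$-leading coefficient is the nonzero constant $1$; its zero set is empty, so the hypothesis of the Extension Theorem is satisfied at every $a\in Z_\C(J)$, and the inclusion follows.

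For the dimension equality, the containment $J=I\cap\C[x_1,\ldots,x_{d-1}]$ makes the natural $\C$-algebra map $A:=\C[x_1,\ldots,x_{d-1}]/J\to B:=\C[x_1,\ldots,x_d]/I$ injective. Using monicity of $f$, polynomial division in $\C[x_1,\ldots,x_{d-1}][x_d]$ expresses every element of $B$ as an $A$-linear combination of $1,\bar x_d,\ldots,\bar x_d^{D-1}$, so $B$ is a finitely generated $A$-module and in particular integral over $A$. The standard theorem that integral extensions of Noetherian rings preserve Krull dimension then yields $\dim B=\dim A$. By Hilbert's Nullstellensatz and the invariance of Krull dimension under passage to the nilradical, this translates into $\dim V=\dim B=\dim A=\dim Z_\C(J)=\dim \pi_d(V)$, finishing the proof.
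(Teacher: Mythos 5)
Your proposal is correct and follows essentially the same route as the paper: the paper cites Theorem~1.68 of Decker--Pfister for the equality $\pi_d(V)=Z_\C(J)$ (which is exactly the Extension Theorem argument you spell out, using that $f$ can be normalized to be monic in $x_d$), and it likewise proves $\dim\pi_d(V)=\dim V$ by observing that $\C[x_1,\ldots,x_{d-1}]/J\subseteq\C[x_1,\ldots,x_d]/I$ is an integral extension and that integral extensions preserve Krull dimension. Your version merely fills in the standard details (monic normalization, division giving module-finiteness, passage to the nilradical), which is fine.
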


\begin{proof}  Theorem~1.68 in \cite{DP} contains everything
in the theorem except for the claim $\dim \pi_d(V)=\dim V$.
For this claim, which is also standard, we first observe that,
for every point $a\in \pi_d(V)$, the $x_d$-coordinates of these preimages are 
roots of the nonzero univariate polynomial $f_a(x_d):=f(a_1,\ldots,a_{d-1},x_d)$.
In other words the extension $\C[x_1,\ldots,x_{d-1}]/J\subseteq \C[x_1,\ldots,x_d]/I$
is integral.\footnote{A ring $S$ is an \emph{integral extension}
of a subring $R\subseteq S$  if all elements
of $S$ are roots of monic polynomials in $R[x]$.}
 By  \cite[Thm.~2.2.5]{celistvost},
integral extension preserves  the (Krull) dimension. 
\end{proof}

%

The next standard lemma (a simple form of the \emph{Noether normalization} for infinite fields)
 implies that the condition in the projection theorem can always 
be achieved by a suitable change of coordinates. See, e.g., \cite[Lemma~1.69]{DP}.

\begin{lemma}\label{l:coordinate}
Let $f \in \C[x_1, \ldots, x_d]$ be a 
polynomial of degree $D \geq 1$. Then there 
are  coefficients $\lambda_1,\ldots,\lambda_{d-1}$ such that 
\[
f'(x_1,\ldots,x_d):=f(x_1+\lambda_1x_d,\ldots,x_{d-1}+\lambda_{d-1}x_d,x_d)
\]
is a polynomial of degree $D$ in which the monomial $x_d^D$ has
a nonzero coefficient. This holds for a generic choice of the
$\lambda_i$, meaning that there is a nonzero polynomial
$g\in\C[y_1,\ldots,y_{d-1}]$ such that $f'$ satisfies the condition
above whenever $g(\lambda_1,\ldots,\lambda_{d-1})\ne 0$.
Consequently, the condition on $f'$ holds for almost all
choices of a \emph{real} vector $(\lambda_1,\ldots,\lambda_{d-1})$.
\end{lemma}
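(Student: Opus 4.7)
The plan is to reduce the statement to a genericity statement about the top-degree homogeneous part of $f$. The substitution $x_i \mapsto x_i + \lambda_i x_d$ for $i<d$ (and $x_d\mapsto x_d$) replaces each variable by a polynomial of degree exactly $1$ in the new variables, so it preserves total degree of every monomial. Consequently, if we write $f=f_D+f_{D-1}+\cdots+f_0$ with $f_j$ the homogeneous part of degree $j$ and $f_D\neq 0$, then $f'$ has degree at most $D$, and the coefficient of $x_d^D$ in $f'$ comes only from $f_D$ (the piece $f_j$ produces only monomials of total degree $j$, so in particular $x_d^k$ only for $k\le j<D$).

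Next I would compute that coefficient explicitly. The contribution of $f_D$ to the monomial $x_d^D$ is obtained by setting $x_1=\cdots=x_{d-1}=0$ in $f_D(x_1+\lambda_1 x_d,\ldots,x_{d-1}+\lambda_{d-1}x_d,x_d)$, which yields $f_D(\lambda_1 x_d,\ldots,\lambda_{d-1}x_d,x_d)=x_d^D\, f_D(\lambda_1,\ldots,\lambda_{d-1},1)$ by homogeneity. So I define
\[
g(y_1,\ldots,y_{d-1}) := f_D(y_1,\ldots,y_{d-1},1) \in \C[y_1,\ldots,y_{d-1}],
\]
and the coefficient of $x_d^D$ in $f'$ equals $g(\lambda_1,\ldots,\lambda_{d-1})$.

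The key point to check is that $g$ is a nonzero polynomial. This uses homogeneity of $f_D$: each monomial $c_\alpha x_1^{\alpha_1}\cdots x_d^{\alpha_d}$ of $f_D$ has $\alpha_d=D-(\alpha_1+\cdots+\alpha_{d-1})$ uniquely determined by $(\alpha_1,\ldots,\alpha_{d-1})$, so the assignment from monomials of $f_D$ to monomials of $g$ is injective with no cancellation. Hence $f_D\neq 0$ forces $g\neq 0$. Now whenever $g(\lambda_1,\ldots,\lambda_{d-1})\neq 0$, the polynomial $f'$ contains $x_d^D$ with nonzero coefficient, so $\deg f'\ge D$, and combined with $\deg f'\le D$ this gives $\deg f'=D$, as required.

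For the final assertion about real $\lambda_i$'s, I would write $g=g_1+i g_2$ with $g_1,g_2\in\R[y_1,\ldots,y_{d-1}]$; since $g\neq 0$, at least one of $g_1,g_2$ is a nonzero real polynomial, and the real zero set $Z(g_1)\cap Z(g_2)\subseteq \R^{d-1}$ is contained in the real zero set of a nonzero real polynomial, hence has Lebesgue measure zero. There is no genuine obstacle in the proof; the only step that needs a moment of care is the nonvanishing of $g$, which rests squarely on the homogeneity of $f_D$.
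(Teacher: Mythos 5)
Your argument is correct: the coefficient of $x_d^D$ in $f'$ is exactly the dehomogenization $g(\lambda_1,\ldots,\lambda_{d-1})=f_D(\lambda_1,\ldots,\lambda_{d-1},1)$ of the leading form, which is nonzero because dehomogenizing a nonzero homogeneous polynomial in its last variable causes no cancellation, and the measure-zero argument for real $\lambda_i$ is fine. The paper itself gives no proof, deferring to \cite[Lemma~1.69]{DP}, and your proof is essentially that standard argument, so nothing further is needed.
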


By combining the projection theorem with Lemma~\ref{l:coordinate}
and iterating, we obtain the following consequence:

\begin{corol}\label{c:thatproj}
Let $V\subset\C^d$ be a complex variety of dimension $k$,
$1\le k\le d-1$, for which all irreducible
components also have dimension $k$. 
Then there is a linear map $\pi\:\C^d\to\C^k$,
whose matrix w.r.t.\ the standard bases is real, such that 
$\pi(V_j)=\C^k$ for every irreducible component  $V_j$ of~$V$.
\end{corol}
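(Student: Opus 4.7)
My plan is to prove the corollary by induction on $d-k\ge 1$, iterating the projection theorem (Theorem~\ref{t:proj}) together with Lemma~\ref{l:coordinate} to peel off one coordinate at a time. After $d-k$ projection steps the ambient space becomes $\C^k$, and since a $k$-dimensional subvariety of $\C^k$ must equal $\C^k$ itself, the desired conclusion falls out.

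For a single inductive step, let $V_1,\ldots,V_m$ denote the irreducible components of $V$. Since $\dim V_j=k<d$, for each $j$ there is a nonzero polynomial $f_j\in I_\C(V_j)$. Lemma~\ref{l:coordinate}, applied to each $f_j$, produces a nonzero polynomial $g_j\in\C[y_1,\ldots,y_{d-1}]$ such that whenever $g_j(\lambda_1,\ldots,\lambda_{d-1})\neq 0$ the transformed polynomial
\[
f_j'(x_1,\ldots,x_d)\;:=\;f_j(x_1+\lambda_1 x_d,\ldots,x_{d-1}+\lambda_{d-1}x_d,x_d)
\]
has $x_d^{\deg f_j}$ with a nonzero coefficient. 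Since $g:=g_1 g_2\cdots g_m$ is nonzero, I would pick a real $(\lambda_1,\ldots,\lambda_{d-1})$ with $g(\lambda)\neq 0$ and let $L\:\C^d\to\C^d$ be the corresponding linear coordinate change, whose matrix is real. Applying Theorem~\ref{t:proj} separately to each $L(V_j)$, I obtain that $\pi_d(L(V_j))\subseteq\C^{d-1}$ is a variety of dimension $k$.

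To close the induction, I would observe that $L(V_j)$ is irreducible (a linear change of coordinates preserves irreducibility) and that $\pi_d(L(V_j))$ is its continuous image, with Theorem~\ref{t:proj} ensuring that this image is already closed; hence $\pi_d(L(V_j))$ is irreducible too. Because two irreducible varieties of the same dimension with one contained in the other must coincide, the distinct sets $\pi_d(L(V_j))$ are precisely the irreducible components of $W:=\pi_d(L(V))$, all of dimension $k$ in $\C^{d-1}$. The induction hypothesis applied to $W$ then yields a real linear map $\pi'\:\C^{d-1}\to\C^k$ with $\pi'(\pi_d(L(V_j)))=\C^k$ for every $j$, and $\pi:=\pi'\circ\pi_d\circ L$ is the desired real linear map $\C^d\to\C^k$. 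The base case $d-k=1$ is handled by the same one-step construction.

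The one point that I expect to require care, rather than a genuine obstacle, is arranging a \emph{single} real coordinate change that simultaneously puts every component $V_j$ into a form suitable for Theorem~\ref{t:proj}; this forces me to combine the $g_j$ from Lemma~\ref{l:coordinate} into one product before choosing $\lambda$, and then to verify that the projected varieties $\pi_d(L(V_j))$ really are again irreducible components of dimension $k$ in $\C^{d-1}$, so that the induction hypothesis applies cleanly in the reduced ambient space.
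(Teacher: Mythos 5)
Your proof is correct and follows essentially the same route as the paper: iterate the standard projection, using Lemma~\ref{l:coordinate} to get a real coordinate change and applying Theorem~\ref{t:proj} to each irreducible component, and conclude with the fact that the only $k$-dimensional variety in $\C^k$ is $\C^k$ itself. The only difference is organizational: the paper chooses a single nonzero polynomial $f$ vanishing on all of $V$ (hence on every $V_j$), so one genericity polynomial suffices and it never needs your (correct) extra step identifying the projected images as the irreducible components of $\pi_d(L(V))$ --- it simply tracks that each component's image remains a $k$-dimensional variety after every projection.
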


\begin{proof} 
We construct $\pi$ iteratively by composing standard projections
and appropriate coordinate changes. First we choose
a nonzero polynomial $f$ vanishing on $V$, and we fix
a change of coordinates as in Lemma~\ref{l:coordinate}
so that the corresponding polynomial $f'$ is as in the
projection theorem. Letting $\pi'_d\:\C^d\to\C^{d-1}$
be the composition of the standard
projection $\pi_d$ with
this coordinate change, we get that $\pi'_d(V)$ is a variety
and $\dim\pi'_d(V)=k$. 

Let $V_j$ be an irreducible component of $V$. Then $f$ vanishes
on $V_j$ as well, and applying the projection theorem with $V_j$
instead of $V$, we get that $\pi'_d(V_j)$ is a $k$-dimensional
variety in $\C^{d-1}$ as well. 

We define $\pi'_i\:\C^i\to\C^{i-1}$, $i=d-1,d-2,\ldots,k+1$, 
analogously; to get $\pi'_i$, we use some nonzero
polynomial $f$ that vanishes on the $k$-dimensional
variety $\pi'_{i+1}\circ\cdots\circ\pi'_d(V)$.
The desired projection $\pi$ is the composition
$\pi:=\pi'_{k+1}\circ\cdots\circ\pi'_d$.

We get that $\pi(V)$ is a $k$-dimensional variety in $\C^k$,
and so is $\pi(V_j)$ for every irreducible component $V_j$ of $V$.
But the only $k$-dimensional variety in $\C^k$ is $\C^k$,
and the corollary follows.
\end{proof}

Now we are ready to prove the key lemma.

\begin{proof}[Proof of Lemma \ref{l:key}]
Given the $k$-dimensional complex variety $V$ and the $n$-point set
$Q\subset\R^d$ as in the key lemma, we consider a projection~$\pi\:\C^d\to\C^k$
as in Corollary~\ref{c:thatproj}.


Since the matrix of $\pi$ is real, we can regard $\bar Q:=\pi(Q)$
as a subset of $\R^k$. More precisely, $\bar Q$ is a \emph{multiset}
in general, since $\pi$ may send several points to the same point.
(It would be easy to avoid such coincidences in the choice of
$\pi$, but we do not have to bother with that.)

We apply the original Guth--Katz polynomial partition
theorem to $\bar Q$, which yields a $\frac1r$-partitioning 
polynomial $\bar g\in\R[y_1,\ldots,y_k]$ for $\bar Q$
of degree $D=O(r^{1/k})$. We note that the Guth--Katz method
works for multisets without any change (because the ham-sandwich
theorem used in the proof applies to arbitrary measures and thus,
in particular, to multisets).

We define a polynomial $g\in\R[x_1,\ldots,x_d]$ as the pullback
of $\bar g$, i.e., $g(x):=\bar g(\pi(x))$. We have
$\deg g=\deg\bar g$ since $\pi$ is linear and surjective.

Moreover, $g$ is a $\frac1r$-partitioning polynomial for $Q$,
since if $\pi(q)$ and $\pi(q')$ lie in different components
of $\R^k\setminus Z(\bar g)$, 
then $q$ and $q'$ lie in different
components of $\R^d\setminus Z(g)$ (indeed, if not, a path
$\gamma$ connecting $q$ to $q'$ and avoiding $Z(g)$ would project
to a path $\bar\gamma$ connecting $\pi(q)$ to $\pi(q')$
and avoiding $Z(\bar g)$).

Finally, since $\bar g$ does not vanish identically on $\C^k$
and $\pi(V_j)=\C^k$ for every $j$, the polynomial $g$ does not vanish identically
on any of the irreducible components $V_j$. The key lemma is proved.
\end{proof}

\section{Proof of Theorem~\ref{t:multilev}}\label{s:overview}

Here we use the key lemma to construct the multilevel partition
in Theorem~\ref{t:multilev}. Thus, we are given an $n$-point
set $P\subset\R^d$ and a parameter $r>1$.

We proceed in $d$ steps. The parameters $r_1,r_2,\ldots,r_d$
are set as follows:
\[
r_1:=r,\ \ r_{i+1}:=r_i^c,\ \ i=1,2,\ldots,d-1,
\]
where $c$ is a sufficiently large constant (depending on $d$).
This  will allow us to consider quantities depending polynomially on $r_i$
as very small compared to~$r_{i+1}$.
We will also have auxiliary degree parameters $D_1,D_2,\ldots,D_d$,
where
\[
D_i=O\left(r_i^{1/(d-i+1)}\right).
\]

At the beginning of the $i$th step, $i=1,2,\ldots,d$,
we will have the following objects:
\begin{itemize} 
\item A complex variety $V_{i-1}$, which may be reducible,
but such that all irreducible components have
 dimension $d-i+1$. Initially, for $i=1$,
$V_0=\C^d$.
\item A set $Q_{i-1}\subseteq P\cap V_{i-1}$, the current
``exceptional set'' that still needs to be partitioned. For $i=1$,
$Q_{0}=P$.
\end{itemize}

We also have
\[
\deg V_{i-1}\le \Delta_{i-1}:= D_1D_2\cdots D_{i-1}.
\]

In the $i$th step, we apply the key lemma to $V_{i-1}$ and $Q_{i-1}$
with $r=r_i$ (and $k=d-i+1$). This yields a real $(1/r_i)$-partitioning
polynomial $g_i$ for $Q_{i-1}$ of degree at most $D_i=O\left(r_i^{1/(d-i+1)}\right)$
that does not vanish identically on any of the irreducible components
of $V_{i-1}$.
(For $i=1$, this is just an application of the original Guth--Katz
polynomial partition theorem.)

Let $S_{i1},\ldots,S_{it_i}$ be the connected components of 
$(V_{i-1}\cap\R^d)\setminus Z(g_i)$, and let $P_{ij}:=S_{ij}\cap Q_{i-1}$
(these are the sets as in Theorem~\ref{t:multilev}).
For every $j$ we  have $|P_{ij}|\le |Q_{i-1}|/r_i\le n/r_i$ since
$g_i$ is a $(1/r_i)$-partitioning polynomial. 
We also have the new exceptional set $Q_{i}:= Q_{i-1}\cap Z(g_i)$.

Finally, we set $V_i:= V_{i-1}\cap Z_\C(g_i)$.
Since $g_i$ does not vanish identically on any of the irreducible
components of $V_{i-1}$, all irreducible components
of $V_{i}$ are $(d-i)$-dimensional by Lemma~\ref{l:gbezout},
and the sum of their degrees, which equals
$\deg V_i$,  is at most 
\[\deg(V_{i-1})\deg(g_i)\le
\Delta_{i-1}D_i
\le D_1D_2\cdots D_{i}=\Delta_i
\] as needed for the next inductive step.
This finishes the $i$th partitioning step.

After the $d$th step, we end up with a $0$-dimensional variety $V_d$,
whose irreducible components are points, 
and their number is $\deg V_d\le\Delta_d$, a quantity polynomially
bounded in~$r$. The set $Q_d$ is
 the exceptional set $P^*$ in Theorem~\ref{t:multilev},
and $|Q_d|\le|V_d|=\deg V_d\le\Delta_d$.

\heading{The crossing number. } It remains to prove the bounds
on the number of the sets $S_{ij}$ crossed by $X$ as in parts (i)
and (ii) of the theorem.

First let $X=Z(h)$ be a hypersurface of degree $D_0=O(1)$ as in (i).
For $i=1$, we actually get that $X$ \emph{intersects}
at most $O\left(r_1^{1-1/d}\right)$ of the $S_{1j}$, because the number of
the $S_{1j}$ intersected by $X$ is no larger than the number
of connected components of $X\setminus Z(g_1)$.
By the Barone--Basu theorem (Theorem~\ref{t:basu}),
the number of these components is bounded by $O((\deg h)(\deg g_1)^{d-1})=
O\left(D_0 D_1^{d-1}\right)=O\left(r_1^{1-1/d}\right)$ as claimed.

Now let $i\ge 2$. We want to bound
the number of the sets $S_{ij}$ crossed by $X$. 
Let $U_1,\ldots,U_b$ be the irreducible components of $V_{i-1}$
whose real points are not completely contained in $X$;
that is, satisfying $U_\ell\cap\R^d\not\subseteq X$.
We have $b\le\deg V_{i-1}\le\Delta_{i-1}$.

For every $j$ such that $X$ crosses $S_{ij}$, let us fix a point
$y_j\in S_{ij}\setminus X$ and another point $z_j\in S_{ij}\cap X$ 
(they exist by the definition of crossing). Since $S_{ij}$
is path-connected, there is also a path $\gamma_j\subseteq S_{ij}$ 
connecting $y_j$ to $z_j$.

Let $z^*_j$ be the first point of $X$ on $\gamma_j$ when we go from
$y_j$ towards $z_j$. We observe that $z^*_j$ lies in some $U_\ell$.
Indeed, points on $\gamma_j$ just before $z^*_j$ lie in $V_{i-1}$
(since $S_{ij}\subseteq V_{i-1}$) but not in $X$, hence they lie
in some $U_\ell$, and $U_\ell$, being an algebraic variety, is closed
in the Euclidean topology.

For any given $U_\ell$, a connected component of 
$(U_\ell\cap\R^d\cap X)\setminus Z(g_i)$ may contain at most one of the
 $z^*_j$ (since the $S_{ij}$ are separated by $Z(g_i)$). Therefore,
the number of the $S_{ij}$ crossed by $X$ is no more than
\[
\sum_{\ell=1}^b \#(W_\ell\setminus Z(g_i)),
\]
 where  $W_\ell:=U_\ell\cap\R^d\cap X$, 
and $\#$ denotes the number of connected components.

Since $U_\ell$ is irreducible and $X$ does not contain all of its
real points, the polynomial $h$ defining $X$ does not vanish on $U_\ell$,
and thus $U_\ell\cap Z_\C(h)$
is a proper subvariety of $U_\ell$ of (complex) dimension $\dim U_\ell-1=d-i$.
Hence, by Lemma \ref{l:dims}, the real variety $W_\ell=(U_\ell\cap Z_\C(h))
\cap\R^d$ also has (real) dimension at most $d-i$.

By Theorem~\ref{t:heintz-def}, we have $U_\ell=Z_\C(f_1,\ldots,f_m)$
for some, generally complex, polynomials of degree at most $\deg U_\ell\le
\Delta_{i-1}$.
Thus $W_\ell$ is the  real zero set of
the real polynomials $h$, $f_1\overline{f}_1,\ldots,f_m\overline{f}_m$.
These polynomials
have degrees bounded by $\max(D_0,2\Delta_{i-1})=O(\Delta_{i-1})$.

By the Barone--Basu theorem again, the number of components
of $W_\ell\setminus Z(g_i)$ is at most
\[
O(\Delta_{i-1}^{d-\dim W_\ell} D_i^{\dim W_\ell})=
O(\Delta_{i-1}^d D_i^{d-i})=O\left(\Delta_{i-1}^d r_i^{1-1/(d-i+1)}\right).
\]
The total number of the $S_{ij}$ crossed by $X$ is then bounded by
$\Delta_{i-1}$ times the last quantity, i.e., by
$O\left(\Delta_{i-1}^{d+1}r_i^{1-1/(d-i+1)}\right)=
O\left((D_1D_2\cdots D_{i-1})^{d+1}r_i^{1-1/(d-i+1)}\right)$.
Since $r_i=r_{i-1}^c$, we can make $(D_1D_2\cdots D_{i-1})^{d+1}$
smaller than any fixed power of $r_i$, 
and hence we can bound the last estimate by $O\left(r_i^{1-1/d}\right)$
(recall that $i\ge 2$),
which finishes the proof of part (i) of the theorem.

For part (ii), the argument requires only minor modifications.
Now $X$ is a variety of dimension $k\le d-2$ defined by real
polynomials of degree at most $D_0=O(1)$. 

We have $\dim V_{i-1}=d-i+1$, and for $\dim X=k\le d-i$ we simply
count the components of $X\setminus Z(g_i)$, as we did for part (i)
in the case $i=1$. This time we obtain the bound
$O\left(D_0^{d-\dim X}D_i^{\dim X}\right)=O\left(D_i^k\right)=O\left(r_i^{k/(d-i+1)}\right)$.

The exponent $\frac k{d-i+1}$ increases with $i$, and
thus it is the largest for $d-i=k$, in which case
the bound is $O\left(r_i^{1-1/(k+1)}\right)$. (This is the critical case;
for all of the other $i$ we get a better bound.)

For $k\ge d-i+1$, we argue as in part (i): letting
$U_1,\ldots,U_b$ be the irreducible components of $V_{i-1}$
with $U_\ell\cap\R^d\not\subseteq X$ and $W_\ell:=
U_\ell\cap\R^d\cap X$, the number of the $S_{ij}$ crossed
by $X$ is bounded by $\sum_{\ell=1}^b \#(W_\ell\setminus Z(g_i))$,
and each $W_\ell$ has (real) dimension at most $\dim V_{i-1}-1=d-i$.
The number of components of $W_\ell\setminus Z(g_i)$ is again
bounded, by the Barone--Basu theorem, by
$O\left(\Delta_{i-1}^d r_i^{1-1/(d-i+1)}\right)$, and the sum over
all $W_\ell $ is $O\left(\Delta_{i-1}^{d+1} r_i^{1-1/(d-i+1)}\right)$.
For every fixed $\delta>0$, we can choose 
the constant $c$ in the inductive definition
of the $r_i$ so large that
$\Delta_{i-1}^{d+1}\le r_i^\delta$,
and so the previous bound is
no more than $O\left(r_i^{1-1/(d-i+1)+\delta}\right)$.

The exponent $1-\frac{1}{d-i+1}$ 
is maximum for $d-i+1=k$, in which case our bound
is $O\left(r_i^{1-1/k+\delta}\right)$. By letting $\delta:=
\frac 1k-\frac1{k+1}$, we bound this by
$O\left(r_i^{1-1/(k+1)}\right)$.
This concludes the proof of
Theorem~\ref{t:multilev}.

\section{Algorithmic aspects of Theorem~\ref{t:multilev}}\label{s:algo}

 The goal of this section is to prove Theorem \ref{t:algo-multilev}. In order to make the proof of Theorem~\ref{t:multilev} algorithmic,
we need to compute both with real and complex varieties. 
A variety $V$, both in the real and complex cases,
is represented by a finite list $f_1,\ldots,f_m$
of polynomials such that $V=Z(f_1,\ldots,f_m)$.

The size of such a representation is measured as $m+\sum_{i=1}^m\deg f_i$.
It would perhaps be more adequate to use ${\deg f_i+d\choose d}$, the number
of monomials in a general $d$-variate polynomial of degree $\deg f_i$,
instead of just $\deg f_i$, but since we consider $d$ constant,
both quantities are polynomially equivalent.

If we want to pass from a complex $V$ defined by generally complex
polynomials $f_1,\ldots,f_m$
to the real variety $V\cap\R^d$, we use the trick already 
mentioned: $V\cap\R^d$ is defined by the real polynomials $
f_1\overline{f}_1,\ldots, f_m\overline{f}_m$.

To make the construction in Theorem~\ref{t:multilev} algorithmic,
besides some obvious steps (such as testing the membership of a point
in a variety, which is done by substituting the point coordinates
into the defining polynomials), we need to implement the following operations:
\begin{enumerate}
\item[(A)] Given a variety $V$ in $\C^d$ of dimension $k$, $1\leq k\le d-1$,
such that all irreducible components of $V$ have dimension $k$,
compute a real projection $\pi\:\C^d\to\C^k$ as in Corollary~\ref{c:thatproj},
i.e., such that $\pi(V_j)=\C^k$ for all irreducible components
$V_j$ of~$V$ .
\item[(B)]  Given a point (multi)set $Q\subset\R^k$, $k\le d$,
 construct  a $\frac1r$-partitioning polynomial
of degree $O\left(r^{1/k}\right)$ (as in the proof of the key lemma).
\item[(C)] Given a complex variety $V$ and a polynomial
$g$, compute $V\cap Z_\C(g)$.
\end{enumerate}

For (A),  we follow the proof of Corollary~\ref{c:thatproj},
i.e., we compute $\pi$ as the composition 
$\pi'_{k+1}\circ\cdots\circ\pi'_{d}$, where $\pi'_i\:\C^i\to\C^{i-1}$
sends $(x_1,\ldots,x_i)$ to $(x_1+\lambda_{i,1} x_i,\ldots,x_{i-1}+
\lambda_{i,i-1} x_i)$, with the $\lambda_{ij}$ chosen independently
at random from the uniform distribution on $[0,1]$, say
(or, if we do not want to assume the capability
of generating such random reals, 
 we can still choose them as random integers in a sufficiently
large range).
The composed $\pi$ will work almost surely
(or, if we use large random integers, with high probability---this
can be checked using the Schwartz--Zippel lemma).

In order to verify that a particular $\pi$ works, we verify the condition
in the projection theorem (Theorem~\ref{t:proj}) 
for each $\pi'_i$ separately. To this end, we 
compute the projected varieties $V_i:=\pi'_{i+1}\circ\cdots
\circ\pi'_d(V)$ in $\C^i$; initially $V_d=V$. 

The projections can be computed in a standard way using Gr\"obner
bases w.r.t.\ the lexicographic ordering; see \cite{CLO}. Namely,
we suppose that $V_i$ has already been computed.
We make the substitution
$x'_j:= x_j+\lambda_{ij}x_i$, where the $\lambda_{ij}$ are 
those used in $\pi^{(i)}$ and $\lambda_{ii}=0$; this transforms the
list of polynomials defining $V_i$ into another list of polynomials
in the new variables $x'_1,\ldots,x'_i$.  Since $1\leq\dim V_i\leq d-1$, it follows that all the polynomials in the list have
degree at least one. Thus, by Theorem \ref{t:groebner}, we compute a 
Gr\"obner basis $G_i$ of the ideal generated by these new polynomials,
with respect to the lexicographic ordering, where the ordering puts
the variable $x_i$ first.

If $G_i$ contains no polynomial whose leading term is a power of $x_i$
(as in the projection theorem),
then we discard $\pi_i$, generate a new one, and repeat the test.
If $G_i$ does contain such a polynomial, then we take all polynomials
in $G_i$ that do not contain $x_i$, and these define the variety
$V_{i-1}=\pi'_i(V_i)$ in $\C^{i-1}$.
Indeed, recall that by \cite[Thm. 3.1.2]{CLO}, if $G$ is a Gr\"obner basis of $I \subseteq \C[x_1,\ldots,x_d]$ then $G\cap \C[x_1,\ldots,x_{d-1}]$
is a Gr\"obner basis of $I\cap \C[x_1,\ldots,x_{d-1}].$ The claim now follows from the projection theorem.

Thus, the computation of $\pi$ takes a constant number of Gr\"obner
basis computations and the expected number of repetitions is a constant.
(In practice, the coordinate projection forgetting the last $d-k$
coordinates will probably work most of the time; then only one
Gr\"obner basis computation is needed to verify that it works.)

For operation (B), constructing a partitioning polynomial 
for points in $\R^k$,
we use a (randomized) algorithm from \cite[Thm.~1.1]{AMSII},
which runs in expected time $O\left(|Q|r+r^3\right)$ for fixed $k$.
It also works for multisets, as can easily be checked.
Since each point of the original input set $P$ participates in no more
than $d$ of these operations, and the value of $r$ in each of these
cases is bounded by a polynomial function of the original
parameter $r$ in the theorem,
the total time spent in all of the operations (B) in the construction
is bounded by $O(nr^C)$ for a constant~$C$.

Operation (C), intersecting a complex variety
with $Z(g)$, is trivial in our representation, since we just add $g$
to the list of the defining polynomials of~$V$.

This finishes the implementation of the operations, and now
we need to substantiate the claims about the number and form
of the sets $S_{ij}$. We recall that each $S_{ij}$ is obtained
as a cell in the arrangement of
$Z(g_i)$ within~$V_{i-1}$. The degrees of
$g_i$ and of the polynomials defining $V_{i-1}$ 
are bounded by a polynomial in $r$.
Then by Theorem~\ref{t:make-arrg},
we get that each $S_{ij}$ is defined by at most $r^C$
polynomials of degree at most $r^C$, and is computed in $r^C$ 
time. The number of the $S_{ij}$ is polynomially bounded 
in $r$ as well.

Finally, we need to consider a range $\gamma\in\Gamma_{d,D_0,s}$.
By definition, $\gamma$ is
a Boolean combination of $\gamma_1,\ldots,\gamma_s$,
where $\gamma_\ell=\{x\in\R^d: h_\ell(x)\ge 0\}$, with a polynomial
$h_\ell$ of degree at most $D_0$, and moreover, if $\gamma$ crosses
a path-connected set $A$, then at least one of the varieties $X_\ell=Z(h_\ell)$
crosses~$A$. It follows that the crossing number for $\gamma$
is no more than $s$-times the bound in Theorem~\ref{t:multilev}(i).
This concludes the proof
of Theorem~\ref{t:algo-multilev}.

\section{The range searching result}\label{s:rgs}

The derivation of the range searching result, Theorem~\ref{t:large-r},
from Theorem~\ref{t:algo-multilev}, is by a standard construction
of a partition tree as in \cite{m-ept-92,AMSII},
 and here we give it for completeness (and also to illustrate
its simplicity).

\begin{proof}[Proof of Theorem~\ref{t:large-r}.]
Given $d,D_0,s$, $\eps>0$ and a set $P\subset\R^d$, we choose
a sufficiently large $n_0=n_0(d,D_0,s,\eps)$ and a sufficiently small parameter
$\eta=\eta(d,D_0,s,\eps)>0$, and we construct a partition
tree $\TT$ for $P$ recursively as follows:

If $|P|\le n_0$, $\TT$ consists of a single node storing
a list of the points of $P$ and their weights.

For $|P|>n_0$, we choose $r:=n^\eta$ and we construct
$P^*$, the $P_{ij}$, and the $S_{ij}$ as in Theorem~\ref{t:multilev}.
The root of $\TT$ stores (the formulas defining) the $S_{ij}$,
the total weight of each $P_{ij}$, and the points of $P^*$ 
together with their weight. For each $i$ and $j$, we 
make a subtree of the root node, which is a partition tree
for $P_{ij}$ constructed recursively by the same method.

By Theorem~\ref{t:algo-multilev}, the construction of the root node
of $\TT$ takes expected time $O\left(nr^C\right)=O\left(n^{1+C\eta}\right)$.
The total preprocessing time $T(n)$ for an $n$-point $P$
obeys the recursion, for $n>n_0$,
 $T(n)\le O\left(n^{1+C\eta}\right)+\sum_{i,j}T(n_{ij})$, with $\sum_{i,j}n_{ij}\le n$
and $n_{ij}\le n/r=n^{1-\eta}$, whose solution is $T(n)\le O\left(n^{1+C\eta}\right)$.
A similar simple analysis shows that the total storage requirement
is $O(n)$.

Let us consider answering a query with  a query range 
$\gamma\in\Gamma_{d,D_0,s}$. 
We start at the root of $\TT$ and maintain a global counter
which is initially set to $0$.
We test the points of the exceptional
set $P^*$ for membership in $\gamma$ one by one and increment
the counter accordingly in $r^{O(1)}$ time. 
Then, for each $i,j$, we distinguish three possibilities:
\begin{enumerate}
\item[(i)] If $S_{ij}\cap\gamma=\emptyset$, we do nothing.
\item[(ii)] If $S_{ij}\subseteq\gamma$, we add the total weight of the
points of $P_{ij}$ to the global counter.
\item[(iii)] Otherwise, we recurse in the subtree corresponding to $P_{ij}$,
which increments the counter by the total weight of the points
of $P_{ij}\cap\gamma$.
\end{enumerate}

The three possibilities above can be distinguished, for given $S_{ij}$,
by constructing the arrangement of the zero sets of the 
polynomials defining $S_{ij}$ plus  the polynomials
defining $\gamma$, according
to Theorem~\ref{t:make-arrg}. The total time, for all $i,j$ together,
is $r^{O(1)}$. 

Since, by Theorem~\ref{t:algo-multilev}, $\gamma$ together crosses
at most $O\left(r_i^{1-1/d}\right)$ of the $S_{ij}$, possibility (iii)
occurs, for given $i$,
 for at most $O\left(r_i^{1-1/d}\right)$ values of $j$. We thus obtain the following
recursion for the query time $Q(n)$, with the initial condition
$Q(n)=O(1)$ for $n\le n_0$:
\[
Q(n)\le n^{C'\eta} + \sum_{i=1}^d O\left(r_i^{1-1/d}\right)Q\left(n/r_i\right),
\ \ \ \ n^\eta\le r_i\le n^{K\eta},
\]
where $C'$ and $K$ are constants independent of $\eta$.
A simple induction on $n$ verifies that this implies,
for $\eta\le (1-1/d)/C'$,
$Q(n)=O\left(n^{1-1/d}\log^B n\right)$ as claimed.
\end{proof}

\section{Remark: On (not) computing irreducible components}\label{s:irred}

For the algorithmic part, it is important that we do not need
to compute the irreducible components of the varieties $V_i$
(although we use the irreducible components in the proof
of our multilevel partition theorem).

There are several algorithms
in the literature for computing irreducible components
of a given complex variety (e.g.,  \cite{ElkadiMourrain}).
However, these algorithms need factorization of multivariate
polynomials over $\C$ as a subroutine (after all, factoring
a polynomial corresponds to computing irreducible components
of a hypersurface).

Polynomial factorization is a well-studied
topic, with many impressive results; see, e.g.,
\cite{kaltofen-surv91} for a survey. In particular, there
are algorithms that work in polynomial time, assuming the
dimension fixed, but only in the Turing machine model. Adapting
these algorithms to the Real RAM model, which is common in
computational geometry and which we use, encounters some
nontrivial obstacles---we are grateful to Erich Kaltofen
for explaining this issue to us.

It may perhaps be possible
to overcome these obstacles by techniques used in real algebraic
geometry for computing in abstract real-closed fields
(see \cite{BasuPollackRoy-book}), but this would need to be worked
out carefully. Then one could probably obtain rigorous complexity
bounds on computing irreducible components of a complex variety,
hopefully polynomial in fixed dimension; we find this question
of independent interest. 

%

\subsection*{Acknowledgment}

We would like to thank Josh Zahl for pointing out mistakes
in an earlier version of this paper, Saugata Basu for
providing a draft of his recent work with Sombra and
useful advice, Erich Kaltofen for kindly answering our questions
concerning polynomial factorization,
and Pavel Pat\'ak, Edgardo Rold\'an Pensado, Mart\'in Sombra, and Martin Tancer for enlightening discussions.

\bibliographystyle{alpha}
\bibliography{pp,cg,../geom}
\end{document}